\newtheorem{theorem}{Theorem}[section]
\newtheorem{lemma}[theorem]{Lemma}
\newtheorem{corollary}[theorem]{Corollary}
\newtheorem{claim}[theorem]{Claim}
\newtheorem{observation}[theorem]{Observation}
\newtheorem{remark}[theorem]{Remark}
\newtheorem{example}[theorem]{Example}
\newcommand{\problemtitle}[1]{\gdef\@problemtitle{#1}}
\newcommand{\probleminput}[1]{\gdef\@probleminput{#1}}
\newcommand{\problemoutput}[1]{\gdef\@problemoutput{#1}}
  \par\addvspace{.5\baselineskip}
  \par\addvspace{.5\baselineskip}
\title{Optimal General Factor Problem and Jump System Intersection}
\author{Yusuke Kobayashi
\thanks{Kyoto University.
E-mail: yusuke@kurims.kyoto-u.ac.jp}
}
\date{}
\begin{document}
\maketitle
\begin{abstract}
In the optimal general factor problem, 
given a graph $G=(V, E)$ and a set $B(v) \subseteq \mathbb Z$ of integers for each $v \in V$, 
we seek for an edge subset $F$ of maximum cardinality subject to $d_F(v) \in B(v)$ for $v \in V$, 
where $d_F(v)$ denotes the number of edges in $F$ incident to $v$. 
A recent crucial work by Dudycz and Paluch shows that 
this problem can be solved in polynomial time if each $B(v)$ has no gap of length more than one. 
While their algorithm is very simple, its correctness proof is quite complicated. 
In this paper, we formulate the optimal general factor problem as the jump system intersection, and 
reveal when the algorithm by Dudycz and Paluch can be applied to this abstract form of the problem. 
By using this abstraction,  
we give another correctness proof of the algorithm, 
which is simpler than the original one. 
We also extend our result to the valuated case. 
\end{abstract}

\section{Introduction}

\subsection{General Factor Problem}

Matching in graphs is one of the most well-studied topics in combinatorial optimization. 
Since a maximum matching algorithm was proposed by Edmonds~\cite{edmonds_1965} in 1960s, 
a lot of generalizations of the matching problem have been proposed and studied in the literature.  
Among them, we focus on the \emph{general factor problem}, which contains several important problems as special cases. 
In the general factor problem (or also called the \emph{$B$-factor problem}), 
we are given a graph $G=(V, E)$ and a set $B(v) \subseteq \mathbb Z$ of integers for each $v \in V$. 
The objective is to find an edge subset $F \subseteq E$ such that $d_F(v) \in B(v)$ for any $v \in V$ if it exists, 
where $d_F(v)$ denotes the number of edges in $F$ incident to $v$. 
Such an edge set is called a \emph{$B$-factor}. 

Since the general factor problem is NP-hard in general (e.g.~it contains the $3$-edge-coloring problem~\cite{lovasz72}), 
polynomially solvable special cases have attracted attention. 
A $B$-factor amounts to a perfect matching if $B(v) = \{1\}$ for each $v \in V$, and 
it is called a \emph{$b$-factor} if $B(v) = \{b(v)\}$ for each $v \in V$, where $b \colon V \to \mathbb Z$. 
For $a, b \colon V \to \mathbb Z$, 
if $B(v) = \{a(v), a(v)+1, a(v) + 2, \dots , b(v) -1, b(v)\}$ (resp,~$B(v) = \{a(v), a(v)+2, a(v) + 4, \dots , b(v) -2, b(v)\}$) for $v \in V$, then
a $B$-factor is called an \emph{$(a, b)$-factor} (resp.~an \emph{$(a, b)$-parity factor}).
It is well-known that, in the above cases, 
we can find a $B$-factor in polynomial time by using a maximum matching algorithm; see~\cite{lovasz72} and \cite[Section 35]{lexbook}. 
Note that the parity constraint can be dealt with by adding $\frac{1}{2} (b(v) - a(v))$ self-loops to each $v \in V$ and modifying $B(v)$. 
Another special case is the \emph{antifactor problem}, in which 
$B(v) = \{0, 1, 2, \dots , d_E(v)\} \setminus \{ \alpha_v \}$ for some $\alpha_v \in \{0, 1, 2, \dots , d_E(v)\}$, that is, exactly one value is forbidden for each $v \in V$. 
Graphs with  an antifactor were characterized by Lov\'asz~\cite{lovasz73}. 
The \emph{edge-and-triangle partitioning problem} is to cover all the vertices in a graph by edges and triangles that are mutually disjoint, 
which can be easily reduced to the general factor problem with $B(v) = \{1\}, \{0, 2\}$, or $\{0, 2, 3\}$. 
The edge-and-triangle partitioning problem is known to be solvable in polynomial time~\cite{CORNUEJOLS1982139}. 

All the above polynomially solvable cases have a property that each $B(v)$ has no \emph{gap} of length more than one. 
Here,  $B(v) \subseteq \mathbb Z$ is said to have \emph{a gap of length $p$} if there exists $\alpha \in B(v)$ such that 
$\alpha + 1, \alpha + 2, \dots , \alpha +p \not\in B(v)$ and $\alpha + p + 1 \in B(v)$. 
It turns out that this is a key property to design a polynomial-time algorithm.
Indeed, Cornu\'ejols~\cite{CORNUEJOLS1988185} gave a polynomial-time algorithm for the general factor problem with this property
and Seb\H{o}~\cite{10.1006/jctb.1993.1035} gave a good characterization.

An optimization variant of the general factor problem has also attracted attention, which we call the \emph{optimal general factor problem} (or the \emph{optimal general matching problem}).  
In the problem, given a graph $G=(V, E)$ and a set $B(v) \subseteq \mathbb Z$ of integers for each $v \in V$, 
we seek for a $B$-factor of maximum cardinality. 
It is the maximum matching problem if $B(v) = \{0, 1\}$, and 
is the maximum $b$-matching problem if $B(v) = \{0, 1, \dots , b(v)\}$, both of which can be solved in polynomial time. 
In the same way as the search problem described above, we can find a maximum $(a, b)$-factor (or $(a, b)$-parity factor) in polynomial time; see~\cite[Section 35]{lexbook}. 
The optimization variant of the edge-and-triangle partitioning problem was studied with the name of the \emph{simplex matching problem}, 
and a polynomial-time algorithm was designed for this problem~\cite{termback}; see also~\cite{pap}.

Recently, Dudycz and Paluch~\cite{DudyczP17} showed that 
the optimal general factor problem can be solved in polynomial time if each $B(v)$ has no gap of length more than one. 
This is definitely a crucial result in this area, because it is a generalization of all the above results. 
While their algorithm is very simple, its correctness proof is quite complicated.

\subsection{Jump System Intersection}

In this paper, we introduce an abstract form of the optimal general factor problem
by using the concept of {\em jump systems} introduced by Bouchet and Cunningham~\cite{Bouchet1995DeltaMatroidsJS} (see also~\cite{Kabadi2005,lovasz97}). 
Let $V$ be a finite set. For $x, y \in \mathbb{Z}^V$, we say that $s \in \mathbb{Z}^V$ is an {\em $(x, y)$-step} if 
$\|s\|_1 = 1$ and $\| (x+s) - y\|_1 = \|x-y\|_1 - 1$. 
A non-empty subset $J \subseteq \mathbb{Z}^V$ is called a {\em jump system} if it satisfies the following property: 
\begin{description}
\item[(JUMP)]
For any $x, y \in J$ and for any $(x, y)$-step $s$, 
either $x + s \in J$ or there exists an $(x+s, y)$-step $t$ such that $x+s+t \in J$. 
\end{description}
Typical examples of jump systems include 
matroids, delta-matroids, integral polymatroids (or submodular systems~\cite{fujishige}), and degree sequences of subgraphs.
When $J \subseteq \mathbb Z$ is one-dimensional, one can see that 
$J$ is a jump system if and only if it has no gap of length more than one. 
One can also see that the direct product of one-dimensional jump systems is also a jump system. 
We consider the optimization problem over the intersection of two jump systems, 
where one is the direct product of one-dimensional jump systems.

\begin{problem}
\problemtitle{\sc Jump System Intersection}
\probleminput{A jump system $J \subseteq \mathbb Z^V$, a finite one-dimensional jump system $B(v) \subseteq \mathbb{Z}$ for each $v \in V$, and a vector $c \in \mathbb{Z}^V$. }
\problemoutput{Find a vector $x \in J \cap B$ maximizing $c^\top x$, where $B \subseteq \mathbb Z^V$ is the direct product of $B(v)$'s. }
\end{problem}

If $J$ consists of degree sequences of subgraphs, i.e., $J = \{d_F \in \mathbb Z^V \mid F \subseteq E\}$, and $c(v) = 1$ for $v \in V$, then 
the problem amounts to the optimal general factor problem, 
which can be solved in polynomial time~\cite{DudyczP17}. 
On the other hand, 
if $J$ is a $2$-polymatroid and $B(v) = \{0, 2\}$ for each $v \in V$, then the problem amounts to the {\em matroid matching problem}~\cite{lovaszMatroidMatching} 
or the {\em matroid parity problem}~\cite{Lawler76}. 
This implies that the problem cannot be solved in polynomial time if $J$ is given as a membership oracle~\cite{JensenKorte,lovasz80}; see also~\cite{LovaszPlummer}. 

A similar problem is to determine whether the intersection of two jump systems $J_1$ and $J_2$ is empty or not, which is also hard in general. This problem was studied in~\cite{lovasz97} as a membership problem of   $J_1 - J_2 := \{x-y \mid x \in J_1,\, y \in J_2\}$, because $J_1 \cap J_2 \neq \emptyset$ if and only if ${\bf 0} \in J_1 - J_2$.

\subsection{Our Contribution: Jump System with SBO Property}

A natural question is why the optimal general factor problem can be solved efficiently, while the general setting of {\sc Jump System Intersection} is hard. 
In this paper, we answer this question by revealing the properties of $J$ that are essential in the argument in~\cite{DudyczP17}.

For a positive integer $\ell$, we denote $\{1, 2, \dots , \ell\}$ by $[\ell]$. 
For  $x, y \in \mathbb Z^V$, we say that a multiset $\{p_1, \dots , p_\ell \}$ of vectors  is a {\em $2$-step decomposition of $y-x$} 
if $p_i \in \mathbb Z^{V}$ and  $\|p_i\|_1 =2$ for each $i \in [\ell]$, $\|y-x\|_1 = 2 \ell$, and $y-x = \sum_{i \in [\ell]} p_i$. 
A non-empty subset $J \subseteq \mathbb{Z}^V$ is called a {\em jump system with SBO property}\footnote{SBO stands for {\em strongly base orderable} (see Example~\ref{ex:01}).} if it satisfies the following property:

\begin{description}
\item[(SBO-JUMP)]
For any $x, y \in J$,  
there exists a $2$-step decomposition  $\{p_1, \dots , p_\ell \}$ of $y-x$ such that $x + \sum_{i \in I} p_i \in J$ for any $I \subseteq [\ell]$.
\end{description}
We can see that (SBO-JUMP) implies (JUMP). 
To see this,  for given $x, y \in J$, suppose that  
there exist vectors $p_1, \dots , p_\ell \in \mathbb Z^{V}$ satisfying the conditions in (SBO-JUMP). 
Then, for any $(x, y)$-step $s$, there exists an $(x+s, y)$-step $t$ such that $s+t =p_i$ for some $i \in [\ell]$, and hence $x+s+t = x + p_i \in J$. 
Therefore, if $J$ is a jump system with SBO property, then it is a jump system such that $\sum_{v\in V} x(v)$ has the same parity for any $x \in J$, 
which is called a {\em constant parity jump system}. 
See \cite{MurotaJumpM} for a characterization of constant parity jump systems. 

We now give a few examples of jump systems with SBO property. 

\begin{example}
\label{ex:01}
A matroid $M = (S, \mathcal{B})$ with a ground set $S$ and a base family $\mathcal{B}$ is called \emph{strongly base orderable}
if, for any bases $B_1, B_2 \in \mathcal{B}$, there exists a bijection $f \colon B_1 \setminus B_2 \to B_2 \setminus B_1$ such that 
$(B_1 \setminus X) \cup \{ f(x) \mid x \in X\} \in \mathcal{B}$ for any $X \subseteq B_1 \setminus B_2$ (see e.g.,~\cite[Section 42.6c]{lexbook}). 
By definition, the characteristic vectors of the bases of a strongly base orderable matroid satisfy (SBO-JUMP). 
\end{example}

Note that the characteristic vectors of the bases do not satisfy (SBO-JUMP) if the matroid is not strongly base orderable, 
which implies that the class of jump systems with SBO property is strictly smaller than that of constant parity jump systems. 
By merging some elements in Example~\ref{ex:01}, 
we obtain the following example,  
which was studied for linear matroids in a problem similar to {\sc Jump System Intersection}~\cite{Szabo08}. 

\begin{example}
\label{ex:01'}
Let $M = (S, \mathcal{B})$ be a strongly base orderable matroid and 
let $(S_1, S_2, \dots , S_n)$ be a partition of $S$. 
Then, 
$J = \{x \in \mathbb Z^n \mid  B \in \mathcal{B},\, x(i) = |B \cap S_i|  \mbox{ for } i \in [n]  \}$ satisfies (SBO-JUMP).  
\end{example}

Another example is the set of the degree sequences of subgraphs. 

\begin{example}
\label{ex:02}
Let $G=(V, E)$ be a graph and let $J$ be the set of the degree sequences of subgraphs, i.e., $J = \{d_F \mid F \subseteq E\}$. 
Then, $J$ satisfies (SBO-JUMP). 
To see this, for $x, y \in J$, let $M, N \subseteq E$ be edge sets with $d_M = x$ and $d_N = y$. 
Then, the symmetric difference of $M$ and $N$ can be decomposed into alternating paths $P_1, \dots , P_\ell$ and alternating cycles 
such that $\{d_{N\cap P_i} - d_{M \cap P_i} \mid i \in [\ell] \}$ is a $2$-step decomposition of $y-x$. Note that each $P_i$ is regarded as an edge subset.  
Let $p_i := d_{N\cap P_i} - d_{M \cap P_i}$ for  $i \in [\ell]$. 
For any $I \subseteq [\ell]$,  $x + \sum_{i \in I} p_i$ is the degree sequence of the symmetric difference of 
$M$ and $\bigcup_{i \in [I]} P_i$, and hence it is in $J$. 
\end{example}

Our contribution is to introduce the jump system with SBO property and show that 
(SBO-JUMP) is crucial when we apply the algorithm in \cite{DudyczP17} for {\sc Jump System Intersection}. 
For $\alpha, \beta \in \mathbb Z$ with $\alpha \le \beta$ that have the same parity, 
a set $\{\alpha, \alpha + 2, \dots , \beta-2, \beta\}$ is called 
a {\em parity interval}. 
The main result in this paper is stated as follows.

\begin{theorem}
\label{thm:main}
There is an algorithm for {\sc Jump System Intersection} 
whose running time is polynomial in $\sum_{v \in V} \sum_{\alpha \in B(v)} \log (| \alpha| +1) + \sum_{v \in V} \log  ( |c(v)| +1)$
if the following properties hold: 
\begin{enumerate}
\item[(C1)]
a feasible solution $x_0 \in J \cap B$ is given,  
\item[(C2)]
$J$ satisfies (SBO-JUMP), and  
\item[(C3)]
for any direct product  $B' \subseteq \mathbb Z^V$ of parity intervals, 
there is an oracle for finding a vector $x \in J \cap B'$ maximizing $c^\top x$. 
\end{enumerate}
\end{theorem}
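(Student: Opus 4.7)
The plan is to translate the algorithmic structure of Dudycz--Paluch into the abstract setting, using (SBO-JUMP) as the combinatorial engine in place of the graph-theoretic symmetric-difference decomposition of Example~\ref{ex:02}. Starting from $x_0 \in J \cap B$, the algorithm maintains a current $x \in J \cap B$ and at each iteration either strictly improves $c^\top x$ by invoking (C3) on a suitable direct product $B' \subseteq B$ of parity intervals, or certifies optimality.

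The first ingredient is a structural observation: because each $B(v)$ has no gap of length more than one, it decomposes canonically into a family $\mathcal I(v)$ of at most $|B(v)|$ maximal parity intervals. Restricting candidate $B'$ to products $\prod_v B'(v)$ with $B'(v) \in \mathcal I(v)$ guarantees $B' \subseteq B$, so the oracle's output lies in $J \cap B$ automatically. Moreover, any $x^* \in B$ is covered by a unique such product (the one assigning to each $v$ the maximal parity interval of $B(v)$ containing $x^*(v)$), so if that product were known the oracle would return a point of objective at least $c^\top x^*$.

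The correctness lemma is then: if the current $x$ is not $c$-optimal, some product $B'$ of parity intervals from $\mathcal I$ yields an improving oracle output. The existence of such a $B'$ is immediate from the previous paragraph applied with $x^*$ equal to the $c$-optimum. The role of (SBO-JUMP) is to make the \emph{search} for $B'$ efficient. Writing $\mathcal B_x$ for the canonical product with $B_x(v) \in \mathcal I(v)$ the parity interval containing $x(v)$, one invokes (SBO-JUMP) on the pair $(x, x^*)$ to obtain a 2-step decomposition $\{p_1, \dots, p_\ell\}$ of $x^* - x$ whose every partial sum lies in $J$. Because each $p_i$ has $\|p_i\|_1 = 2$ and hence is supported on at most two coordinates, this decomposition localizes the difference between $\mathcal B_x$ and a ``good'' product: the improving $B'$ should be reachable from $\mathcal B_x$ by modifying $B_x(v)$ at a bounded number of coordinates corresponding to the support of a single $p_i$ or a small combination thereof, yielding a polynomial-size family of candidate $B'$'s to try at each iteration.

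The main obstacle I anticipate lies in two places. First, the ``local modification'' claim has a subtlety: the partial sums $x + \sum_{i \in I} p_i$ are in $J$ by (SBO-JUMP) but need not lie in $B$, since $B(v)$ may skip values that the 2-step decomposition passes through. Some care is therefore needed to identify a subset $I \subseteq [\ell]$ whose corresponding modification both improves $c^\top x$ and produces a product of parity intervals contained in $B$; I expect this is where the Dudycz--Paluch correctness argument's intricacy resurfaces in abstract form, and where (SBO-JUMP) must be exploited most heavily. Second, since improvements in $c^\top x$ may be as small as $1$ while $c^\top x$ can be super-polynomially large, I would control the iteration count by dyadic scaling of the cost vector, running the procedure on $c^{(k)} = \lfloor c/2^k\rfloor$ for $k$ descending from $O(\log \max_v |c(v)|)$ down to $0$ and using a standard potential argument to bound the number of augmentations per scale. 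Combined with the polynomial-size enumeration of candidate $B'$'s per iteration, this yields the stated running-time bound polynomial in $\sum_v \sum_{\alpha \in B(v)} \log(|\alpha|+1) + \sum_v \log(|c(v)|+1)$.
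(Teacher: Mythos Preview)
Your plan correctly identifies the algorithmic skeleton and the central obstacle, but it leaves the hardest step unresolved. You say the improving product $B'$ should be reachable from $\mathcal B_x$ by modifying a bounded number of coordinates, and then candidly admit ``I expect this is where the Dudycz--Paluch correctness argument's intricacy resurfaces.'' That is precisely the crux: without a proof that \emph{some} product differing from $\mathcal B_x$ in at most two coordinates carries a strictly better point of $J\cap B$, neither your per-iteration search nor your scaling argument gets off the ground. (Scaling controls how many improvements you need, but first you need to guarantee that a local improvement exists whenever $x$ is suboptimal.)

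The paper fills exactly this gap. It reformulates ``changing at most two parity intervals'' as the condition $\mathrm{dist}_B(x,x')\le 2$, and then proves two lemmas: (i) a combinatorial lemma (Lemma~\ref{prop:localimprove}) stating that for any $x,y\in B$ with $\mathrm{dist}_B(x,y)=4$ and any $2$-step decomposition with weights, some partial sum lands in $B$ at $\mathrm{dist}_B=2$ from $x$ with weight at least $\min\{0,\text{total}\}$; and (ii) a lexicographic chaining argument (Lemma~\ref{lem:updateratio}) that, together with (SBO-JUMP), upgrades this to a \emph{multiplicative} improvement bound $c^\top x' - c^\top x \ge \tfrac{2}{\|x^*-x\|_1}(c^\top x^* - c^\top x)$. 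This multiplicative bound means the gap to optimum shrinks geometrically, so the iteration count is $O(B_{\mathrm{size}}\log(\mathrm{OPT}-c^\top x_0))$ with no scaling at all --- the paper highlights dispensing with scaling as one of its contributions. Your proposed dyadic-scaling route is the original Dudycz--Paluch one and would also work, but only once the local-improvement existence is established, and you have not supplied that step.
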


Note that no explicit representation of $J$ is required in this theorem. 
We only need the oracle in Condition (C3). 
Note also that Condition (C3) implies the existence of the membership oracle of $J$.

When $J$ is the set of the degree sequences of subgraphs, we see that $J$ satisfies (C1)--(C3) as follows. 
It was shown by Cornu\'ejols~\cite{CORNUEJOLS1988185} that a feasible solution $x_0 \in J \cap B$ in (C1) can be found  in polynomial time, and 
(C2) holds by Example~\ref{ex:02}. 
The subproblem in (C3) is to find a maximum $(a, b)$-parity factor, 
which can be solved in polynomial time.

Our proof for Theorem~\ref{thm:main} is based on the argument of Dudycz and Paluch~\cite{DudyczP17}. 
While their algorithm is very simple, the correctness proof is quite complicated. 
In particular, an involved case analysis is required to prove a key lemma~\cite[Lemma 2]{DudyczP17}. 
Our technical contribution in this paper is to give a new simpler proof of this lemma
in a slightly different form (Lemma~\ref{prop:localimprove}). 
In our proof, we use several properties that are peculiar to our problem formulation (see Section~\ref{sec:propmin}), 
which is an advantage of introducing the abstract form of the optimal general factor problem. 
As a byproduct of our analysis, 
we show that a scaling technique used in~\cite{DudyczP17} is not required in the algorithm, 
which is another contribution of this paper.

We also introduce a quantitative extension of (SBO-JUMP), 
and extend Theorem~\ref{thm:main} to a valuated variant of {\sc Jump System Intersection}; see Theorem~\ref{thm:mainweighted}.

\subsection{Organization}

The rest of this paper is organized as follows. 
We first give some preliminaries in Section~\ref{sec:preliminary}. 
In Section~\ref{sec:algorithm}, 
we describe our algorithm and prove its correctness by using a key technical lemma (Lemma~\ref{prop:localimprove}). 
A proof of Lemma~\ref{prop:localimprove} is given in Section~\ref{sec:proofkey}, where 
properties shown in Section~\ref{sec:propmin} play important roles to simplify the argument.  
In Section~\ref{sec:valuated}, we extend our results to the valuated case. 
Then, in Section~\ref{sec:weightedGF}, we show that a polynomial-time algorithm for the weighted general factor problem is derived from these results. 
Finally, in Section~\ref{sec:conclusion}, we conclude this paper by giving some remarks.

\section{Preliminaries}
\label{sec:preliminary}

Let $V$ be a finite set. For $v \in V$, let $\chi_v \in \mathbb Z^V$ denote the characteristic vector of $v$, that is, $\chi_v(v)=1$ and $\chi_v(u)=0$ for $u \in V \setminus \{v\}$. 
For each $v \in V$, we are given a non-empty finite set $B(v) \subseteq \mathbb Z$ that has no gap of length more than one, i.e., $B(v)$ is a one-dimensional jump system. 
Throughout this paper, let $B \subseteq \mathbb Z^V$ be the direct product of $B(v)$'s, i.e., $B := \{ x \in \mathbb Z^V \mid x (v) \in B(v) \mbox{ for any $v \in V$} \}$. 
For $x \in \mathbb Z^V$, we denote $\min B \le x \le \max B$ if $\min B(v) \le x(v) \le \max B(v)$ for every $v \in V$.  
For $x \in \mathbb Z^V$, we define $q(x) = |\{ v \in V \mid  x(v) \not\in B(v) \} |$. 
Note that, if $\min B \le x \le \max B$, then  $q(x) := \min_{y \in B} \|x - y\|_1$, because each $B(v)$ has no gap of length greater than one. 
Recall that a parity interval is a subset of $\mathbb Z$ that is of the form 
$\{\alpha, \alpha + 2, \dots , \beta-2, \beta\}$. 
For $v \in V$, we see that $B(v)$ is uniquely partitioned into inclusionwise maximal parity intervals (see Figure~\ref{fig:01}), 
which we call {\em maximal parity intervals of $B(v)$}.
For $\alpha, \beta \in \mathbb Z$ with $\min B(v) \le \alpha \le \beta \le \max B(v)$, we define ${\rm dist}_{B(v)} (\alpha, \beta)$ as the number of 
maximal parity intervals of $B(v)$ intersecting $[\alpha, \beta]$ minus one. 
In other words, ${\rm dist}_{B(v)} (\alpha, \beta)$ is the number of pairs of consecutive integers in $B(v) \cap [\alpha, \beta]$. 
We also define  ${\rm dist}_{B(v)} (\beta, \alpha) := {\rm dist}_{B(v)} (\alpha, \beta)$. 
For $x, y \in \mathbb Z^V$ with $\min B \le x, y \le \max B$, 
we define ${\rm dist}_{B} (x, y) := \sum_{v \in V} {\rm dist}_{B(v)} (x(v), y(v))$; see Figure~\ref{fig:02} for an example. 
Note that ${\rm dist}_{B}$ satisfies the triangle inequality. 
The following lemma is derived from the definitions of ${\rm dist}_{B}$ and $q$. 

\begin{figure}
    \centering
    \includegraphics[width=10cm]{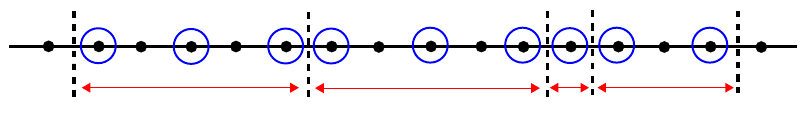}
    \caption{Blue circles are elements in $B(v)$ and red arrows indicate maximal parity intervals.}
    \label{fig:01}
\end{figure}

\begin{figure}
    \centering
    \includegraphics[width=8cm]{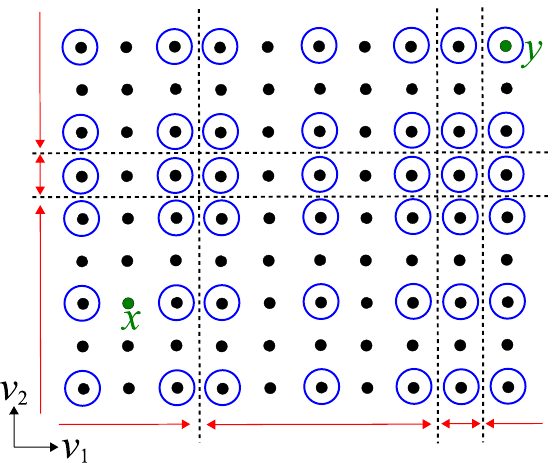}
    \caption{In this two-dimensional example, ${\rm dist}_{B(v_1)} (x(v_1), y(v_1)) = 3$, ${\rm dist}_{B(v_2)} (x(v_2), y(v_2)) = 2$, ${\rm dist}_{B} (x, y) = 5$, $\|x-y\|_1 = 14$, $q(x)=1$, and $q(y)=0$.}
    \label{fig:02}
\end{figure}

\begin{lemma}\label{lem:parity}
For $x, y \in \mathbb Z^V$ with $\min B \le x, y \le \max B$, we have that ${\rm dist}_{B} (x, y) + \|x-y\|_1 +  q(x) + q(y)$ is even. 
\end{lemma}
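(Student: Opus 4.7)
The plan is to reduce the claim to a per-coordinate statement and then prove it by a short parity count on a $0/1$ string. Since each of $\mathrm{dist}_B$, $\|\cdot\|_1$, and $q$ decomposes coordinate-wise, the quantity in the lemma equals $\sum_{v\in V} f(v)$, where
\[
f(v) := \mathrm{dist}_{B(v)}(x(v),y(v)) + |x(v)-y(v)| + \mathbf{1}[x(v)\notin B(v)] + \mathbf{1}[y(v)\notin B(v)].
\]
Thus it suffices to show that $f(v)$ is even for every $v\in V$. Fixing such a $v$, I would set $\alpha := x(v)$, $\beta := y(v)$, and (by symmetry in $x$ and $y$) assume $\alpha\le\beta$.

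To analyze $f(v)$, I would associate to the integers $\alpha,\alpha+1,\ldots,\beta$ the $0/1$ string $\sigma_\alpha\sigma_{\alpha+1}\cdots\sigma_\beta$ defined by $\sigma_k := \mathbf{1}[k\in B(v)]$, and classify each of the $\beta-\alpha$ consecutive index pairs $(k,k+1)$ by the pattern $(\sigma_k,\sigma_{k+1})$. Let $A,B,C,D$ denote the counts of pairs of types $(1,1)$, $(1,0)$, $(0,1)$, $(0,0)$, respectively. By the ``pairs of consecutive integers in $B(v)\cap[\alpha,\beta]$'' description of $\mathrm{dist}_{B(v)}$ stated just before the lemma, $A = \mathrm{dist}_{B(v)}(\alpha,\beta)$, and trivially $A+B+C+D = \beta-\alpha = |x(v)-y(v)|$.

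The key observation---and the only place where I use that $B(v)$ has no gap of length greater than one---is that $D = 0$. Indeed, $[\alpha,\beta]\subseteq[\min B(v),\max B(v)]$, so a pair $(k,k+1)$ with $\sigma_k=\sigma_{k+1}=0$ would give two consecutive missing elements of $B(v)$, forcing a gap of length at least two and contradicting the one-dimensional jump system hypothesis. Substituting $A+B+C = |x(v)-y(v)|$ then yields
\[
f(v) \;=\; 2A + (B+C) + \mathbf{1}[\sigma_\alpha=0] + \mathbf{1}[\sigma_\beta=0],
\]
so modulo $2$ it remains only to verify that $B+C \equiv \mathbf{1}[\sigma_\alpha=0] + \mathbf{1}[\sigma_\beta=0]\pmod{2}$.

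Finally, $B+C$ is exactly the number of transitions in $\sigma$, i.e., the number of indices $k$ with $\sigma_k\neq\sigma_{k+1}$; a standard elementary fact is that this number has the same parity as $\sigma_\alpha\oplus\sigma_\beta$, which in turn coincides modulo $2$ with $\mathbf{1}[\alpha\notin B(v)] + \mathbf{1}[\beta\notin B(v)]$. Combining these observations gives $f(v)\equiv 0\pmod{2}$, and summing over $v\in V$ finishes the proof. The plan has no real obstacle; the only delicate point is the exclusion of the $(0,0)$ pattern, which is precisely where the no-long-gap hypothesis on $B(v)$ enters.
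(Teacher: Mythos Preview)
Your proof is correct. Both your argument and the paper's work coordinate-wise, but they differ in how they handle the endpoints $x(v),y(v)$ possibly lying outside $B(v)$. The paper first replaces $x,y$ by nearest points $x',y'\in B$, observes that $\mathrm{dist}_B(x,y)=\mathrm{dist}_B(x',y')$, and then uses the clean fact that for endpoints in $B(v)$ the quantities $\mathrm{dist}_{B(v)}$ and $|\cdot|$ have the same parity; the $q$-terms are absorbed by the triangle inequality for $\|\cdot\|_1$. You instead attack the general case directly with the $0/1$-string count, isolating the no-long-gap hypothesis as the vanishing of the $(0,0)$-pair count and reducing everything to the parity of transitions versus endpoint indicators. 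Your route is a bit longer but more self-contained: it avoids having to justify the equality $\mathrm{dist}_B(x,y)=\mathrm{dist}_B(x',y')$ (which the paper asserts from $\mathrm{dist}_B(x,x')=\mathrm{dist}_B(y,y')=0$ but does not spell out), and it makes explicit exactly where the jump-system assumption on $B(v)$ enters. One cosmetic point: you reuse the letter $B$ for a pair count while $B$ already denotes the product set; renaming that counter would avoid a clash.
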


\begin{proof}
Let $x', y' \in B$ be vectors such that $q(x) = \| x - x'\|_1$ and $q(y) = \| y - y'\|_1$. 
Then, ${\rm dist}_{B} (x, x') = {\rm dist}_{B} (y, y') = 0$, and hence ${\rm dist}_{B} (x, y) = {\rm dist}_{B} (x', y')$. 
Since  $x', y' \in B$ implies that ${\rm dist}_{B(v)} (x'(v), y'(v))$ and  $|x'(v)-y'(v)|$ have the same parity for each $v \in V$, we obtain 
\begin{align*}
&{\rm dist}_{B} (x, y) + \|x-y\|_1 +  q(x) + q(y)  \\
&\quad = {\rm dist}_{B} (x', y') + \|x-y\|_1 +  \|x-x'\|_1 + \|y-y'\|_1 \\
&\quad \equiv {\rm dist}_{B} (x', y') + \|x'-y'\|_1 \\
&\quad = \sum_{v \in V} ({\rm dist}_{B(v)} (x'(v), y'(v)) + |x'(v)-y'(v)|) \\
&\quad \equiv  0 \pmod 2, 
\end{align*}
which completes the proof. 
\end{proof}

\section{Algorithm and Correctness}
\label{sec:algorithm}

Our algorithm for {\sc Jump System Intersection} is basically the same as~\cite{DudyczP17}. 
We first initialize the vector $x:=x_0$, where $x_0$ is as in Condition (C1) in Theorem~\ref{thm:main}. 
In each iteration, 
we compute a vector $x' \in J \cap B$ maximizing $c^\top  x'$ subject to ${\rm dist}_{B} (x, x') \le 2$. 
If $c^\top  x' = c^\top  x$, then the algorithm terminates by returning $x$. 
Otherwise, we replace $x$ with $x'$ and repeat the procedure. 
See Algorithm~\ref{alg:01} for a pseudocode of the algorithm.

\begin{algorithm}
    \KwInput{$J, B$, $c$, and $x_0$.} 
    \KwOutput{$x \in J \cap B$ maximizing $c^\top x$.}
	$x \gets x_0$\;  
	\While{\rm true}{ 
		Find a vector $x' \in J \cap B$ maximizing $c^\top  x'$ subject to ${\rm dist}_{B} (x, x') \le 2$\;
		\If{$c^\top x' = c^\top x$}{\Return{$x$}}
		$x \gets x'$\; 
    	}
    \caption{Algorithm for {\sc Jump System Intersection} }\label{alg:01}
\end{algorithm}

In the correctness proof, we use the following key lemma, whose proof is given in Section~\ref{sec:proofkey}. 
Note again that giving a simpler proof for this lemma is a technical contribution of this paper. 

\begin{lemma}\label{prop:localimprove}
Let $x, y \in B$ be vectors with ${\rm dist}_{B} (x, y)=4$, 
let $\{p_1, \dots , p_\ell\}$ be a $2$-step decomposition of $y-x$, 
and let $w_i \in \mathbb R$ for $i \in [\ell]$. 
Then, there exists a set $I \subseteq [\ell]$ such that 
$z := x + \sum_{i \in I} p_i$  is contained in $B$, 
${\rm dist}_{B} (x, z)= 2$, and 
$\sum_{i \in I} w_i \ge \min \{0, \sum_{i \in [\ell]} w_i\}$. 
\end{lemma}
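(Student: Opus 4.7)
First, I will extract the structural consequences of the hypotheses. Since $2\ell = \|y-x\|_1 = \|{\textstyle\sum_i p_i}\|_1 \le \sum_i \|p_i\|_1 = 2\ell$, equality holds in the triangle inequality on each coordinate, so for every $v$ the signs of the nonzero $p_i(v)$ all agree with that of $y(v)-x(v)$. From this, $\|z_I - x\|_1 = 2|I|$ and $z_I$ lies in the axis-aligned box between $x$ and $y$ for every $I \subseteq [\ell]$. Combining with Lemma~\ref{lem:parity} and $q(x) = q(z_I) = 0$ whenever $z_I \in B$ yields ${\rm dist}_B(x, z_I) \in \{0, 2, 4\}$; moreover, enlarging $I$ moves each coordinate $z_I(v)$ monotonically toward $y(v)$, so ${\rm dist}_B(x, z_I)$ is monotone non-decreasing in $I$ under inclusion.

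Second, the main technical goal is to exhibit a subset $I \subseteq [\ell]$ such that both $z_I$ and $z_{[\ell]\setminus I}$ are valid intermediates, i.e., both in $B$ with ${\rm dist}_B(x, \cdot) = 2$. Granting this, the lemma follows at once: since $\sum_{i\in I} w_i + \sum_{i \in [\ell]\setminus I} w_i = \sum_{i\in [\ell]} w_i$, we obtain $\max\bigl(\sum_{i\in I}w_i,\, \sum_{i \in [\ell]\setminus I}w_i\bigr) \ge \tfrac{1}{2}\sum_{i\in [\ell]}w_i \ge \min\bigl\{0, \sum_{i\in [\ell]}w_i\bigr\}$, and one picks whichever of $I$ or $[\ell] \setminus I$ attains the maximum.

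Third, to construct such a balanced partition I plan to do a case analysis on how the four boundary crossings are distributed across the coordinates---writing $d_v := {\rm dist}_{B(v)}(x(v),y(v))$ so that $\sum_v d_v = 4$---and on the profile of each $p_i$ (whether $p_i = \pm 2\chi_v$, $\pm \chi_u \pm \chi_v$, or $\chi_u - \chi_v$). In each case I will cluster the $p_i$'s into two groups, each accounting for two of the crossings, and show that the resulting partition yields valid intermediates on both sides. The hardest part will be configurations in which the valid $I$'s do not automatically come in complementary pairs---for example, when a length-$1$ gap of some $B(v)$ blocks the naive intermediate at the middle of the trajectory. In such cases I will fall back on the structural properties of $2$-step decompositions to be developed in Section~\ref{sec:propmin}, either to rule out the bad configurations or to replace the balanced-partition argument with an averaging argument over the family of valid $I$'s (whose average weight already dominates $\min\{0,\sum_i w_i\}$ provided the family is sufficiently rich).
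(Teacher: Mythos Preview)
Your complementary-pair strategy fails in general. Take $V=\{v_1,v_2,v_3,v_4\}$ with $B(v_j)=\{0,1\}$ for $j\le 3$ and $B(v_4)=\{0,1,3,5\}$; set $x=(0,0,0,0)$, $y=(1,1,1,3)$, and $p_j=\chi_{v_j}+\chi_{v_4}$ for $j=1,2,3$. Here ${\rm dist}_B(x,y)=4$, $\ell=3$, and the only subsets $I$ with $z_I\in B$ and ${\rm dist}_B(x,z_I)=2$ are the three singletons $\{1\},\{2\},\{3\}$; each complement has $z(v_4)=2\notin B(v_4)$. This instance already satisfies every structural reduction of Section~\ref{sec:propmin} (every $p_i$ lives on $U$, no $p_i=2\chi_a$ with $d_a=1$, all $p_i$ distinct), so those reductions cannot rescue the balanced-partition plan.

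The averaging fallback is likewise insufficient as stated. In the variant with $y=(1,1,1,5)$ and an extra $p_4=2\chi_{v_4}$, the valid sets are $\{1\},\{2\},\{3\},\{1,4\},\{2,4\},\{3,4\}$. With $w_1=w_2=w_3=1$, $w_4=-3$ one has $\sum_i w_i=0$, yet the average of $w(I)$ over the valid family is $-\tfrac12<0$. The lemma still holds because the \emph{maximum} is $1$, but establishing that the maximum always clears the bar is exactly the content of the lemma --- the averaging heuristic does not reduce the difficulty.

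The paper's proof is organized around a genuinely different idea: it fixes the weight-dependent set $I^+:=\{i:w_i>0\}$, analyses $\phi(z^+):=({\rm dist}_B(x,z^+),\,q(z^+))$, and in each case exhibits a solution by either adding to $I^+$ only indices of nonpositive weight (forcing $w(I)\ge w([\ell])$) or removing from $I^+$ only indices of positive weight (forcing $w(I)\ge 0$); in one remaining subcase it averages over two such perturbations $I_1,I_2$ with $I_1\cap I_2\subseteq I^+\subseteq I_1\cup I_2$. The dependence on $I^+$ is essential, not cosmetic: as the example above shows, no weight-blind selection of a valid $I$ (or complementary pair of them) can succeed, so the signs of the $w_i$ must enter the construction.
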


Let $w \in \mathbb R^{\ell}$ be the vector consisting of $w_i$'s, and 
denote $w(I) := \sum_{i \in I} w_i$ for $I \subseteq [\ell]$.

\begin{remark}
\label{rem:symmetry}
In Lemma~\ref{prop:localimprove}, the roles of $x$ and $y$ are symmetric by changing the signs of $p_i$ and $w_i$, because 
$\bar I := [\ell] \setminus I$ satisfies the following: 
\begin{itemize}
\item
$x + \sum_{i \in I} p_i = y + \sum_{i \in \bar I} (- p_i)$, 
\item
${\rm dist}_{B} (x, z)= 2$ if and only if ${\rm dist}_{B} (y, z)= 2$, and 
\item
$\sum_{i \in I} w_i \ge \min \big\{ 0, \sum_{i \in [\ell]} w_i \big\}$ if and only if 
$\sum_{i \in \bar I} (- w_i) \ge \min \big\{ 0, \sum_{i \in [\ell]} (- w_i) \big\}$. 
\end{itemize}
\end{remark}

We next show the following lemma. 
Note that almost the same result is shown for degree sequences in~\cite[Lemma 1]{DudyczP17}.

\begin{lemma}
\label{lem:reachable}
Let $k$ be a positive integer. Let $x, y \in B$ be vectors with ${\rm dist}_{B} (x, y)=2k$ and 
let $\{p_1, \dots , p_\ell\}$ be a $2$-step decomposition of $y-x$. 
Then, there exist index sets $\emptyset = I_0 \subsetneq I_1 \subsetneq I_2 \subsetneq \dots \subsetneq I_k = [\ell]$ such that 
$z_j := x + \sum_{i \in I_j} p_i$ is contained in $B$ and ${\rm dist}_{B} (z_{j-1}, z_j)= 2$ for $j \in [k]$. 
\end{lemma}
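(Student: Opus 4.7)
The plan is to induct on $k$. For $k = 1$ the claim is immediate: take $I_0 = \emptyset$ and $I_1 = [\ell]$, yielding $z_0 = x$ and $z_1 = y$, both in $B$, with ${\rm dist}_B(z_0, z_1) = 2$.

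For $k \geq 2$, I would first record two structural observations about any 2-step decomposition of $y - x$. Since $\|y - x\|_1 = 2\ell = \sum_i \|p_i\|_1$, the inequality $|\sum_i p_i(v)| \leq \sum_i |p_i(v)|$ must be tight at every $v$, so the nonzero values $p_i(v)$ share a common sign. Consequently, for every $I \subseteq [\ell]$ the multiset $\{p_i : i \in I\}$ is itself a 2-step decomposition of $x_I - x$ (where $x_I := x + \sum_{i \in I} p_i$) and $\{p_i : i \notin I\}$ is one of $y - x_I$; moreover each coordinate of $x_I$ lies between $x(v)$ and $y(v)$. Combining this monotonicity with the coordinate identity ${\rm dist}_{B(v)}(x(v), y(v)) = {\rm dist}_{B(v)}(x(v), x_I(v)) + {\rm dist}_{B(v)}(x_I(v), y(v))$ (valid whenever $x_I(v) \in B(v)$ is between $x(v)$ and $y(v)$) yields the crucial additivity ${\rm dist}_B(x, y) = {\rm dist}_B(x, x_I) + {\rm dist}_B(x_I, y)$ whenever $x_I \in B$.

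The inductive step then reduces to producing a non-empty $I_1 \subsetneq [\ell]$ with $z_1 := x_{I_1} \in B$ and ${\rm dist}_B(x, z_1) = 2$. Given such an $I_1$, additivity gives ${\rm dist}_B(z_1, y) = 2(k-1)$, the complementary decomposition is a 2-step decomposition of $y - z_1$, and the inductive hypothesis applied to $(z_1, y)$ produces the remainder of the chain $z_1, z_2, \ldots, z_k = y$, which is grafted onto $z_0 = x$ by setting $I_j := I_1 \cup J_{j-1}$. To produce $I_1$, I would first locate a subset $I' \subseteq [\ell]$ with $x_{I'} \in B$ and ${\rm dist}_B(x, x_{I'}) = 4$ (for $k = 2$ simply take $I' := [\ell]$), and then apply Lemma~\ref{prop:localimprove} to the pair $(x, x_{I'})$ with the 2-step decomposition $\{p_i : i \in I'\}$ and zero weights; the output $I_1 \subseteq I'$ satisfies $x_{I_1} \in B$ and ${\rm dist}_B(x, x_{I_1}) = 2$, and is a proper non-empty subset of $[\ell]$ by distance comparison.

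The hard part will be the existence of such an $I'$ when $k \geq 3$, since Lemma~\ref{prop:localimprove} itself is pegged to distance exactly $4$. I expect to handle this by an auxiliary argument that exploits the jump-system structure of $B$ together with the no-cancellation property to show that the monotone function $I \mapsto {\rm dist}_B(x, x_I)$, restricted to $\{I \subseteq [\ell] : x_I \in B\}$, attains every even value in $[0, 2k]$, and in particular the value $4$. An alternative, possibly cleaner, route is to build the chain from both endpoints simultaneously, pairing the argument at $x$ with its mirror at $y$ via Remark~\ref{rem:symmetry}, so that Lemma~\ref{prop:localimprove} can always be applied to a certified distance-$4$ subpath without ever needing an intermediate $I'$ explicitly.
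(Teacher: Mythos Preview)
Your structural observations (sign consistency of the $p_i(v)$, monotonicity of $x_I$, and additivity of ${\rm dist}_B$ when $x_I \in B$) are correct and useful. The reduction to finding a single $I_1$ with ${\rm dist}_B(x,z_1)=2$ is also sound. The genuine gap is exactly where you flagged it: for $k \ge 3$ you need a subset $I'$ with $x_{I'} \in B$ and ${\rm dist}_B(x,x_{I'}) = 4$ before you can invoke Lemma~\ref{prop:localimprove}, and neither of your two sketched routes actually produces one. The first route --- showing that $I \mapsto {\rm dist}_B(x,x_I)$, restricted to $\{I : x_I \in B\}$, attains every even value in $[0,2k]$ --- is not an auxiliary observation but essentially the lemma itself (indeed it is stronger in one respect and weaker in another, but of the same difficulty); you would be proving the statement in order to apply it. The second route, working inward from both ends, does not escape the bootstrap either: to peel off a distance-$2$ step at $y$ you face the same problem symmetrically, so you never reach a certified distance-$4$ window without already knowing how to take distance-$2$ steps.

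The paper avoids this circularity by not using Lemma~\ref{prop:localimprove} at all. It builds $I_1$ greedily: starting from $I=\emptyset$, it adds one index $i$ at a time while maintaining the invariant that $\phi(z) := ({\rm dist}_B(x,z),\, q(z)) \in \{(0,0),(0,2),(1,1),(2,0)\}$, where $z = x + \sum_{i\in I} p_i$. A short three-case analysis (one case per value of $\phi(z)$ other than $(2,0)$) shows there is always a legal index to add, and since $[\ell]$ is finite the process must hit $\phi(z)=(2,0)$. This gives $I_1$ directly, with no appeal to the distance-$4$ lemma. In fact the logical dependence runs the other way: the proof of Lemma~\ref{prop:localimprove} later reuses this very greedy argument (as Claim~\ref{clm:02}), so routing the present proof through Lemma~\ref{prop:localimprove} would be, at best, circuitous.
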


\begin{proof}
It suffices to construct $I_1\subseteq [\ell]$ satisfying the conditions, because $I_2, I_3, \dots, I_{k-1}$ can be constructed in this order in the same way. 

By changing the direction of axes if necessary, we may assume that $x(v) \le y(v)$ for every $v \in V$. 
Then, each $p_i$ is equal to  $\chi_a + \chi_b$ for some $a, b \in V$ (possibly $a=b$). 
For $z \in \mathbb Z^V$, we denote $\phi(z) := ({\rm dist}_{B} (x, z), q(z)) \in \mathbb Z_{\ge 0}^2$. 
In order to construct $I_1$, we start with $I := I_0 = \emptyset$ and add an element one by one to $I$. 
During the procedure, we keep $\phi(z) \in \{ (0, 0), (0, 2), (1, 1), (2, 0)\}$, 
where $z := x+\sum_{i \in I} p_i$. 
Note that $\phi(z) = (0, 0)$ when $I$ is initialized to $I_0$.

If $\phi(z) = (2, 0)$, then $I_1 := I$ clearly satisfies the conditions. 
Otherwise, it holds that $\phi(z) \in \{(0, 0), (0, 2), (1, 1)\}$. 
In this case, we show that there exists an index $i \in [\ell] \setminus I$ such that 
$\phi(z + p_i) \in \{(0, 0), (0, 2), (1, 1), (2, 0)\}$ by the following case analysis. 

\begin{itemize}
\item
Suppose that $\phi(z) = (0, 0)$. Let $i$ be an arbitrary index in $[\ell] \setminus I$. 
Then, $p_i = \chi_a + \chi_b$ for some $a, b \in V$ (possibly $a=b$). 
We see that $\phi(z + \chi_a) \in \{(0, 1), (1, 0)\}$, and hence 
$\phi(z + p_i) = \phi(z +\chi_a + \chi_b) \in \{(0, 0), (0, 2), (1, 1), (2, 0)\}$. 
\item
Suppose that $\phi(z) = (0, 2)$. Then, $z + \chi_a + \chi_b \in B$ for some distinct $a, b \in V$ such that $z(a) < y(a)$ and $z(b) < y(b)$. 
Let $i$ be an index in $[\ell] \setminus I$ such that $p_i = \chi_a + \chi_c$ for some $c \in V$ (possibly $c=a$ or $c=b$). 
Then, we see that $\phi(z + \chi_a) = (0, 1)$, and hence 
$\phi(z + p_i) = \phi(z + \chi_a + \chi_c) \in \{(0, 0), (0, 2), (1, 1)\}$. 
\item
Suppose that $\phi(z) = (1, 1)$. Then, $z + \chi_a \in B$ for some $a \in V$ with $z(a) < y(a)$. 
Let $i$ be an index in $[\ell] \setminus I$ such that $p_i = \chi_a + \chi_b$ for some $b \in V$ (possibly $b=a$). 
Then, we see that $\phi(z + \chi_a) = (1, 0)$, and hence 
$\phi(z + p_i) = \phi(z + \chi_a + \chi_b) \in \{(1, 1), (2, 0)\}$. 
\end{itemize}

If $\phi(z + p_i) = (2, 0)$, then $I_1 := I \cup \{i\}$ satisfies the conditions. 
Otherwise, we replace $I$ with $I \cup \{i\}$ and repeat the procedure. 
Since $[\ell]$ is finite, 
this process terminates by finding a desired index set $I_1$, 
which completes the proof. 
\end{proof}

By using Lemmas~\ref{prop:localimprove} and~\ref{lem:reachable}, 
we can evaluate the improvement of the objective value in each iteration of Algorithm~\ref{alg:01} as follows.

\begin{lemma}
\label{lem:updateratio}
Let $J$ be a jump system with SBO property, 
let $x^* \in J \cap B$ be an optimal solution of {\sc Jump System Intersection}, and   
let $x \in J \cap B$ be a vector with $x \neq x^*$. 
Let $x' \in J \cap B$ be a vector maximizing $c^\top x'$ subject to ${\rm dist}_{B} (x, x') \le 2$. 
Then, $c^\top x' - c^\top x \ge \frac{2}{\|x^* - x \|_1}  (c^\top x^* - c^\top x)$.  
\end{lemma}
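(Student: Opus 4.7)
The plan is to combine the (SBO-JUMP) property of $J$ with Lemmas~\ref{lem:reachable} and~\ref{prop:localimprove} via a bubble-sort style rearrangement of the reachability chain.

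We may assume $c^\top x^* > c^\top x$, since otherwise $x' := x$ is admissible and the right-hand side of the claim is non-positive. Applying (SBO-JUMP) to $x, x^* \in J$, fix a $2$-step decomposition $\{p_1, \dots, p_\ell\}$ of $x^* - x$ (so $\ell = \|x^*-x\|_1/2$) with $x + \sum_{i \in I} p_i \in J$ for every $I \subseteq [\ell]$. Set $w_i := c^\top p_i$ and $W := \sum_i w_i = c^\top x^* - c^\top x > 0$; the claim then reduces to $c^\top x' - c^\top x \ge W/\ell$. Lemma~\ref{lem:parity} (combined with the coordinatewise bound ${\rm dist}_B \le \|\cdot\|_1$) implies that $k := {\rm dist}_B(x, x^*)/2$ is a non-negative integer with $k \le \ell$. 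If $k \le 1$, then $x^*$ itself satisfies ${\rm dist}_B(x, x^*) \le 2$ and is admissible, yielding $c^\top x' \ge c^\top x^* = c^\top x + W \ge c^\top x + W/\ell$. So assume $k \ge 2$.

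By Lemma~\ref{lem:reachable} there is a chain $\emptyset = I_0 \subsetneq \cdots \subsetneq I_k = [\ell]$ with $z_j := x + \sum_{i \in I_j} p_i \in J \cap B$ and ${\rm dist}_B(z_{j-1}, z_j) = 2$ for every $j$. Writing $W_j := c^\top(z_j - z_{j-1})$, we have $\sum_{j=1}^k W_j = W$, so some index $j^*$ satisfies $W_{j^*} \ge W/k \ge W/\ell$. The plan is to modify the chain by adjacent swaps so that this largest step is brought to position $1$, whence $x' := z_1$ will have $c^\top x' - c^\top x = W_1 \ge W/\ell$. Each adjacent swap at positions $(t, t{+}1)$ applies Lemma~\ref{prop:localimprove} to the pair $(z_{t-1}, z_{t+1})$ (whose $B$-distance is $4$) with the induced decomposition $\{p_i : i \in I_{t+1} \setminus I_{t-1}\}$ and a suitably shifted weight vector; the lemma returns a new middle element $z_t' \in J \cap B$ at distance $2$ from both $z_{t-1}$ and $z_{t+1}$. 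At most $k-1$ such swaps then suffice to place the index $j^*$ at position $1$.

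The main obstacle is the per-swap guarantee. The raw conclusion $w(I) \ge \min\{0, \tilde W\}$ of Lemma~\ref{prop:localimprove} does not immediately yield the inequality $w(I) \ge \max(W_t, W_{t+1})$ that bubble-sort progress would demand. The intended remedy is to shift the weights passed to the lemma by a carefully chosen constant so that the signed total $\tilde W$ vanishes, and then to invoke the symmetric form from Remark~\ref{rem:symmetry} so that of the two candidate middle elements produced (one from each direction), at least one has the required weight. Since the lemma does not control the cardinality $|I|$ of the returned index set, tracking the interplay between this shift, the value of $|I|$, and the parity structure of the decomposition is the principal technical difficulty.
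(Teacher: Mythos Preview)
Your proposal has a genuine gap, which you yourself identify: the bubble-sort mechanism is not completed. The conclusion of Lemma~\ref{prop:localimprove}, even after a constant shift of the weights, only gives $\sum_{i\in I} w_i \ge (|I|/m)\cdot(W_t+W_{t+1})$ for some uncontrolled $|I|$, which does not in general dominate $\max(W_t,W_{t+1})$. Invoking the symmetric form from Remark~\ref{rem:symmetry} produces a second candidate midpoint, but neither is guaranteed to improve on the original $z_t$ in the sense required for a sorting pass. So the ``at most $k-1$ swaps'' program, as stated, does not close.

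The paper's argument circumvents this per-swap difficulty by a global device. Shift the weights once at the outset, setting $w_i := c^\top p_i - W/\ell + \varepsilon$ for a small $\varepsilon>0$, so that $w(I_0)=0<\varepsilon\ell=w(I_k)$ and distinct-cardinality sets never tie. Among all chains produced by Lemma~\ref{lem:reachable}, pick one for which $(w(I_1),\dots,w(I_{k-1}))$ is lexicographically maximum. Letting $j$ be the first index with $w(I_{j-1})<w(I_j)$, if $j\ge 2$ then $w(I_{j-2})>w(I_{j-1})<w(I_j)$; a single application of Lemma~\ref{prop:localimprove} to $(z_{j-2},z_j)$ yields $I'_{j-1}$ with $w(I'_{j-1})\ge\min\{w(I_{j-2}),w(I_j)\}>w(I_{j-1})$, contradicting lex-maximality. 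Hence $j=1$, i.e.\ $w(I_1)>0$, which unwinds to $c^\top z_1 - c^\top x \ge |I_1|\cdot W/\ell \ge W/\ell$. The point is that the shifted weights turn the condition ``$\ge\min\{0,\text{total}\}$'' into ``$\ge\min\{w(I_{j-2}),w(I_j)\}$'' directly, and the lex-max choice means only one invocation of Lemma~\ref{prop:localimprove} is ever needed, not $k-1$.
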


\begin{proof}
If ${\rm dist}_{B} (x, x^*) = 2$, then the inequality is obvious. 
Since ${\rm dist}_{B} (x, x^*)$ is even by Lemma~\ref{lem:parity}, suppose that ${\rm dist}_{B} (x, x^*) \ge 4$. 
Since $x, x^* \in J$, there exists a $2$-step decomposition $\{p_1, \dots , p_\ell\}$ of $x^*-x$ that satisfies the conditions in (SBO-JUMP). 
For $i \in [\ell]$, we define $w_i = c^\top p_i - \frac{ c^\top x^* - c^\top x }{\ell} + \varepsilon$, where $\varepsilon$ is a sufficiently small positive number (e.g. $\varepsilon = \frac{1}{(\ell+1)^2}$) that is used to break ties. 
Observe that, for $I, I' \subseteq [\ell]$ with $|I| \neq |I'|$, $w(I) \neq  w(I')$ holds because of $\varepsilon$. 
By Lemma~\ref{lem:reachable}, there exist index sets $\emptyset = I_0 \subsetneq I_1 \subsetneq I_2 \subsetneq \dots \subsetneq I_k = [\ell]$ such that 
$z_j := x + \sum_{i \in I_j} p_i$ is contained in $B$ and ${\rm dist}_{B} (z_{j-1}, z_j)= 2$ for $j \in [k]$.
We choose $I_1, I_2, \dots , I_{k-1}$ so that 
$(w(I_1), w(I_2), \dots , w(I_{k-1}))$ is lexicographically maximum. 
Note that $z_j \in J$ for $j \in [k]$ by (SBO-JUMP).

Let $j \in [k]$ be the minimum index such that $w (I_{j-1})  < w (I_{j})$. 
Note that such $j$ must exist, because $w(I_0) = 0  < \varepsilon \ell = w(I_k)$.
Assume that $j \neq 1$. 
Then, the minimality of $j$ shows that 
$$
w( I_{j-2})  > w( I_{j-1} ) < w( I_j ), 
$$
where we note that $w( I_{j-2} ) \neq w ( I_{j-1})$ as $|I_{j-2}| \neq |I_{j-1}|$. 
By applying Lemma~\ref{prop:localimprove} to a $2$-step decomposition $\{ p_i \mid i \in I_j \setminus I_{j-2} \}$ of $z_j - z_{j-2}$, 
we obtain an index set $I \subseteq I_j \setminus I_{j-2}$ such that 
$z'_{j-1} := z_{j-2} + \sum_{i \in I} p_i$ is contained in $B$,  
${\rm dist}_{B} (z_{j-2}, z'_{j-1})= 2$, and 
$w( I ) \ge \min \{0,  w ( I_j  \setminus I_{j-2} )  \}$. 
Let $I'_{j-1} := I_{j-2} \cup I$. 
By $z'_{j-1} = x + \sum_{i \in I'_{j-1}} p_i$ and (SBO-JUMP), we see that $z'_{j-1} \in J$.  
Furthermore, we obtain 
$$
w( I'_{j-1} ) = w( I_{j-2}) + w( I )  \ge \min \left\{ w( I_{j-2} ) , w( I_{j} ) \right\} > w( I_{j-1} ), 
$$
which contradicts the choice of $I_{j-1}$. 

Therefore, we obtain $j=1$, that is,  $0 = w( I_0 ) < w( I_1 )$. 
Since 
\begin{align*}
0 < w( I_1 ) =   \sum_{i \in I_1} \left( c^\top p_i - \frac{ c^\top x^* - c^\top x }{\ell} + \varepsilon \right) =   c^\top z_1 - c^\top x -  \left( \frac{c^\top x^* - c^\top x}{\ell} - \varepsilon \right)   |I_1| 
\end{align*}
and $\varepsilon$ is sufficiently small, 
we obtain 
\begin{equation*}
c^\top z_1 - c^\top x \ge \frac{(c^\top x^* - c^\top x) |I_1| }{\ell}. 
\end{equation*}
We also see that $c^\top x' \ge c^\top z_1$, because 
$z_1 \in J\cap B$ and ${\rm dist}_{B} (x, z_1) \le 2$. 
By combining these inequalities with $|I_1| \ge 1$ and $\ell = \frac{\|x^* - x \|_1}{2}$, 
we obtain $c^\top x' - c^\top x \ge \frac{2}{\|x^* - x \|_1}  (c^\top x^* - c^\top x)$. 
\end{proof}

This implies that 
the global optimality is guaranteed by the local optimality as follows.

\begin{corollary}
\label{cor:optimality}
In an instance of {\sc Jump System Intersection} with (C2), 
a feasible solution $x \in J \cap B$ maximizes $c^\top x$ if and only if 
$c^\top x \ge c^\top x'$ for any $x' \in J \cap B$ with ${\rm dist}_{B} (x, x') \le 2$. 
\end{corollary}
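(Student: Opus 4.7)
The plan is to derive Corollary~\ref{cor:optimality} as an essentially immediate consequence of Lemma~\ref{lem:updateratio}, which already does the heavy lifting. There is nothing further to prove combinatorially; the corollary is just the contrapositive reformulation of the quantitative improvement bound.

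The ``only if'' direction is trivial by definition: if $x$ maximizes $c^\top x$ over all of $J \cap B$, then in particular $c^\top x \ge c^\top x'$ for every $x' \in J \cap B$ with $\mathrm{dist}_B(x, x') \le 2$. For the ``if'' direction, I argue by contraposition. Suppose $x \in J \cap B$ is not globally optimal, and let $x^* \in J \cap B$ be a global maximizer, so that $c^\top x^* > c^\top x$ and in particular $x \ne x^*$. Let $x' \in J \cap B$ be a vector maximizing $c^\top x'$ subject to $\mathrm{dist}_B(x, x') \le 2$ (such $x'$ exists since $x$ itself is a candidate). Then Lemma~\ref{lem:updateratio}, which applies because of hypothesis (C2), yields
\[
c^\top x' - c^\top x \;\ge\; \frac{2}{\|x^* - x\|_1}\bigl(c^\top x^* - c^\top x\bigr) \;>\; 0,
\]
so there exists $x' \in J \cap B$ with $\mathrm{dist}_B(x, x') \le 2$ and $c^\top x' > c^\top x$. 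This is the contrapositive of the ``if'' direction.

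There is no real obstacle here, since the entire technical content has been absorbed into Lemma~\ref{lem:updateratio} (and, beneath it, into Lemmas~\ref{prop:localimprove} and~\ref{lem:reachable}); the only thing to be slightly careful about is that the maximizer $x'$ in the statement of the corollary need not be unique and need not equal $x$, but the contrapositive only requires the existence of some improving $x'$ in the local neighborhood, which is exactly what Lemma~\ref{lem:updateratio} produces.
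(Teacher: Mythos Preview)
Your proof is correct and matches the paper's approach: the paper simply states the corollary as an immediate consequence of Lemma~\ref{lem:updateratio} without giving any separate argument, and your contrapositive derivation is exactly the intended one-line deduction.
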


We are now ready to prove the correctness of Algorithm~\ref{alg:01}. 

\begin{proof}[Proof of Theorem~\ref{thm:main}]
We first show that each iteration of Algorithm~\ref{alg:01} runs in polynomial time. 
For $x, x' \in B$ with ${\rm dist}_B(x, x') \le 2$, we see that
$x(v)$ and $x'(v)$ are contained in the same maximal parity interval of $B(v)$
for any $v \in V$ except at most two elements. 
Thus, for $x \in B$, 
$\{ x' \in B \mid {\rm dist}_B(x, x') \le 2 \}$ can be partitioned into 
$O(n^2)$ sets, each of which is a direct product of parity intervals. 
Therefore, 
we can find a vector $x' \in J \cap B$ maximizing $c^\top  x'$ subject to ${\rm dist}_{B} (x, x') \le 2$ 
by using the oracle in Condition (C3), 
$O(n^2)$ times.

We next evaluate the number of iterations in the algorithm. 
Let ${\rm OPT}$ be the optimal value of the problem and let $B_{\rm size} := \sum_{v \in V} |B(v)|$. 
Since $J$ is a jump system with SBO property by Condition (C2), we can apply Lemma~\ref{lem:updateratio}. 
By this lemma, if $x$ is replaced with $x'$ in line 6 of Algorithm~\ref{alg:01}, 
then 
$$
{\rm OPT} - c^\top x' \le \left( 1 - \frac{2}{\|x^* - x \|_1} \right)  ({\rm OPT} - c^\top x) \le \left( 1 - \frac{1}{B_{\rm size}} \right)  ({\rm OPT} - c^\top x),  
$$
that is, the gap to the optimal value decreases by a factor of at most $1 - \frac{1}{B_{\rm size}}$. 
Therefore, by repeating this procedure $O(B_{\rm size} \log ({\rm OPT} - c^\top x_0))$ times, the algorithm terminates. 
The obtained solution is an optimal solution by Corollary~\ref{cor:optimality}.

This shows that Algorithm~\ref{alg:01} solves {\sc Jump System Intersection} in polynomial time. 
\end{proof}

\section{Proof of Lemma~\ref{prop:localimprove}}
\label{sec:proofkey}

In this section, we give a proof of Lemma~\ref{prop:localimprove}. 
In this lemma, a tuple $(x, y, (p_i)_{i \in [\ell]}, w)$ is called an {\it instance} and 
a set $I$ satisfying the conditions is called a {\it solution}.


\subsection{Minimal Counterexample}
\label{sec:propmin}

To derive a contradiction, assume that Lemma~\ref{prop:localimprove} does not hold. 
Suppose that $(x, y, (p_i)_{i \in [\ell]}, w)$ is a counterexample that minimizes $\|y-x\|_1$. 
Among such counterexamples, we choose one that minimizes $|\{ (p_i, w_i) \mid i \in [\ell]\}|$, that is, 
we minimize the number of different $(p_i, w_i)$ pairs. 
Such $(x, y, (p_i)_{i \in [\ell]}, w)$ is called a {\it minimal counterexample}. 
Define $U \subseteq V$ as $U := \{v \in V \mid {\rm dist}_{B(v)} (x(v), y(v)) \ge 1 \}$. 
By changing the direction of axes if necessary, we may assume that $x(v) \le y(v)$ for every $v \in V$. 
Then, each $p_i$ is equal to  $\chi_a + \chi_b$ for some $a, b \in V$ (possibly $a=b$). 
We show some properties of the minimal counterexample. 
Our argument becomes simpler with the aid of these properties.

\begin{claim}\label{clm:03}
For any $i \in [\ell]$,  $p_i = \chi_a + \chi_b$ for some $a, b \in U$ (possibly $a = b$). 
Consequently, $x(v)=y(v)$ for all $v\in V\setminus U$. 
\end{claim}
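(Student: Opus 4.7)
The plan is to argue by contradiction from the minimality of the counterexample $(x,y,(p_i)_{i\in[\ell]},w)$. I first observe that the two assertions of the claim are equivalent under the working assumption $x\le y$: since $y(v)-x(v)=\sum_i p_i(v)$ with each $p_i(v)\ge 0$, we have $y(v)=x(v)$ for all $v\in V\setminus U$ if and only if no $p_i$ touches a coordinate in $V\setminus U$. So it suffices to derive a contradiction from the existence of some $i_0\in[\ell]$ and $a\in V\setminus U$ with $p_{i_0}(a)\ge 1$.

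The engine of the proof is a reduction to a strictly smaller instance obtained by setting $y':=y-2\chi_a$. Because $a\notin U$, the values $x(a)$ and $y(a)$ lie in one maximal parity interval of $B(a)$, and $p_{i_0}(a)\ge 1$ together with the parity of $y(a)-x(a)=\sum_k p_k(a)$ (even, since $a\notin U$) forces $y(a)\ge x(a)+2$; hence $y'\in B$, ${\rm dist}_B(x,y')={\rm dist}_B(x,y)=4$, and $\|y'-x\|_1=2\ell-2$. Once I exhibit a $2$-step decomposition of $y'-x$, the minimality of $\|y-x\|_1$ forces this smaller tuple not to be a counterexample, so it admits a solution $I'$. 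Lifting $I'$ back to a solution of the original instance will then yield the desired contradiction.

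The lift is easy when $p_{i_0}=2\chi_a$: the restricted tuple $(p_k)_{k\ne i_0}$ already decomposes $y'-x$, and I take $I:=I'\cup\{i_0\}$ when $w_{i_0}\ge 0$ and $I:=I'$ when $w_{i_0}<0$. Membership $x+\sum_{k\in I}p_k\in B$ and the distance ${\rm dist}_B(x,\cdot)=2$ are preserved because toggling $2\chi_a$ at $a\notin U$ keeps the $a$-coordinate inside a single parity interval, and the weight inequality $\sum_{k\in I}w_k\ge\min\{0,w([\ell])\}$ follows from the minimality guarantee $w(I')\ge\min\{0,w([\ell]\setminus\{i_0\})\}$ together with the monotonicity of $\min\{0,\cdot\}$. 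If instead $p_{i_0}=\chi_a+\chi_b$ with $b\ne a$, the evenness of $\sum_k p_k(a)$ combined with $p_{i_0}(a)=1$ forces some $j\ne i_0$ with $p_j(a)\ge 1$; if $p_j=2\chi_a$ the easy sub-case applies to $p_j$, and otherwise $p_j=\chi_a+\chi_c$ with $c\ne a$, and I substitute the pair $(p_{i_0},p_j)$ by $(\tilde p_{i_0},\tilde p_j):=(2\chi_a,\chi_b+\chi_c)$ (same sum, valid $2$-steps), drop $\tilde p_{i_0}$, and apply the same reduction on the modified decomposition.

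The main obstacle is the back-lift in this last sub-case. The solution $I'$ to the reduced instance must be promoted to some $I\subseteq[\ell]$ with respect to the \emph{original} decomposition, and a case analysis on whether $j\in I'$ and on the sign of $w_{i_0}$ is required. The branches with $j\in I'$ lift cleanly, yielding $x+\sum_{k\in I}p_k$ equal to $z'$ or $z'+2\chi_a$, both in $B$ at ${\rm dist}_B$-distance $2$ from $x$. The delicate branch is $j\notin I'$ with $w_{i_0}\ge 0$, where the natural candidates $I'\cup\{i_0\}$ and $I'\cup\{j\}$ shift $z'$ by the odd vectors $\chi_a+\chi_b$ and $\chi_a+\chi_c$; resolving it should exploit either the secondary minimality of $|\{(p_i,w_i)\}|$ (which rules out configurations in which the modified instance is itself a strictly cheaper counterexample) or the coordinate constraints on $z'$ that $z'\in B$ and ${\rm dist}_B(x,z')=2$ impose at $a$, $b$, $c$, certifying that at least one candidate lift remains a valid solution of the original instance.
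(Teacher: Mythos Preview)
Your overall plan---reduce to a strictly smaller instance by replacing $y$ with $y-2\chi_a$ and invoke the minimality of $\|y-x\|_1$---is exactly the paper's strategy, and your treatment of the case $p_{i_0}=2\chi_a$ is correct (the paper branches on which of the two inequalities $w(I')\ge 0$ or $w(I')\ge w([\ell]\setminus\{i_0\})$ holds rather than on the sign of $w_{i_0}$, but both work).

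The genuine gap is in the second case, where $p_{i_0}=\chi_a+\chi_b$ and you pick $j$ with $p_j=\chi_a+\chi_c$ ($b,c\neq a$). Your substitution $(p_{i_0},p_j)\mapsto(2\chi_a,\chi_b+\chi_c)$ keeps the two weights $w_{i_0},w_j$ separate, and this is precisely what breaks the back-lift. You correctly flag the branch $j\notin I'$, $w_{i_0}\ge 0$ as delicate, but in fact the branch $j\in I'$, $w_{i_0}<0$ is equally bad: the only lift that lands in $B$ is $I=(I'\setminus\{j\})\cup\{i_0,j\}$, giving $z'+2\chi_a$ with weight $w(I')+w_{i_0}$, and when $0<w([\ell])-w_{i_0}$ the guarantee $w(I')\ge 0$ only yields $w(I)\ge w_{i_0}$, which can be strictly below $\min\{0,w([\ell])\}$. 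Neither the secondary minimality of $|\{(p_i,w_i)\}|$ nor the coordinate constraints at $a,b,c$ rescue these branches; the modified tuple has the same $\|y-x\|_1$, so the first does not apply, and nothing forces $z'(b)+1\in B(b)$ or $z'(c)+1\in B(c)$.

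The fix the paper uses is to \emph{merge} the two steps rather than separate them: replace $p_{i_0}$ and $p_j$ by a single step $p_{i''}:=\chi_b+\chi_c$ carrying the combined weight $w_{i_0}+w_j$, and set $y':=y-2\chi_a$. The new instance has one fewer step, total weight exactly $w([\ell])$, and a solution $I'$ lifts trivially: take $I:=(I'\setminus\{i''\})\cup\{i_0,j\}$ if $i''\in I'$ (so $z=z'+2\chi_a$ and $w(I)=w'(I')$) and $I:=I'$ otherwise (so $z=z'$ and $w(I)=w'(I')$). In both cases the weight bound transfers verbatim because the totals agree, which is exactly what your split-weight approach loses.
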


\begin{proof}
Assume to the contrary that there exists $i \in [\ell]$ such that  $p_i = \chi_a + \chi_c$ for some $a \in V$ and for some $c \in V \setminus U$. 

Suppose that $a = c$, i.e., $p_i = 2 \chi_c$. 
We consider a new instance by removing $p_i$ and replacing $y$ with $y - 2 \chi_c \in B$. 
By the minimality of the counterexample, 
the obtained instance has a solution $I \subseteq [\ell] \setminus \{i\}$, which implies that 
$w( I ) \ge 0$ or $ w( I ) \ge w (  [\ell] \setminus \{i\} )$. 
Then, $I' := I$ is a solution of the original instance in the former case and 
$I' := I \cup \{i\}$ is a solution of the original instance in the latter case, 
which is a contradiction. 

Suppose next that $a \not= c$. 
Since ${\rm dist}_{B(c)} (x(c), y(c))=0$ and $x(c), y(c) \in B(c)$, we see that $x(c)$ and $y(c)$ have the same parity. 
Thus, there exists $i' \in [\ell] \setminus \{i\}$ such that $p_{i'}=\chi_b + \chi_c$ for some $b \in V \setminus \{c\}$. 
We merge $p_i$ and $p_{i'}$ as follows: 
replace $p_i$ and $p_{i'}$ with a new vector $p_{i''} := \chi_a + \chi_b$ whose weight is $w_i + w_{i'}$, and replace $y$ with $y - 2 \chi_c \in B$. 
By the minimality of the counterexample, 
the obtained instance has a solution $I \subseteq ([\ell] \setminus \{i, i'\}) \cup \{i''\}$. 
Then, we see that the set 
$$
I' := \begin{cases}
(I  \setminus \{i''\}) \cup \{i, i'\}  & \mbox{if $i'' \in I$,} \\
I   & \mbox{otherwise}
\end{cases}
$$
is a solution of the original instance, which is a contradiction. 
\end{proof}

\begin{claim}\label{clm:04}
For any $i \in [\ell]$ and for any $a \in U$ with ${\rm dist}_{B(a)} (x(a), y(a))=1$,  $p_i \not= 2 \chi_a$. 
\end{claim}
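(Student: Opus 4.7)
I plan to mimic the merge-and-reduce strategy used in the proof of Claim~\ref{clm:03}. The argument begins by assuming, for contradiction, that $p_i = 2\chi_a$ for some $a \in U$ with ${\rm dist}_{B(a)}(x(a), y(a)) = 1$. The hypothesis forces $x(a)$ and $y(a)$ to occupy two consecutive maximal parity intervals of $B(a)$ of opposite parities, so $y(a) - x(a)$ is odd. Since $p_i$ contributes $2$ to the $a$-coordinate and every $p_j \ge 0$, we obtain $y(a) - x(a) \ge 3$, and by the parity of the $a$-contributions there exists some $i' \in [\ell] \setminus \{i\}$ with $p_{i'} = \chi_a + \chi_c$, where Claim~\ref{clm:03} guarantees $c \in U \setminus \{a\}$. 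Let $\gamma$ denote the maximum of the maximal parity interval of $B(a)$ containing $x(a)$. The inequality $y(a) - x(a) \ge 3$ implies that at least one of $y(a) \ge \gamma + 3$ or $x(a) \le \gamma - 2$ holds; by Remark~\ref{rem:symmetry}, I may assume $y(a) \ge \gamma + 3$, so that $y - 2\chi_a \in B$.

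Following the proof of Claim~\ref{clm:03}, I would build the smaller instance $(\tilde x, \tilde y, (\tilde p_j), (\tilde w_j))$ with $\tilde x := x$, $\tilde y := y - 2\chi_a$, index set $([\ell] \setminus \{i, i'\}) \cup \{i''\}$, new vector $\tilde p_{i''} := \chi_a + \chi_c$ of weight $\tilde w_{i''} := w_i + w_{i'}$, and $\tilde p_j := p_j$, $\tilde w_j := w_j$ for $j \in [\ell] \setminus \{i, i'\}$. A routine check shows $\|\tilde y - \tilde x\|_1 = \|y - x\|_1 - 2$, ${\rm dist}_B(\tilde x, \tilde y) = 4$ (the $a$-coordinate still crosses one boundary), and the total weight is preserved: $\sum_j \tilde w_j = W := \sum_j w_j$. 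By the minimality of the counterexample in $\|y-x\|_1$, the smaller instance admits a solution $\tilde I \subseteq ([\ell] \setminus \{i, i'\}) \cup \{i''\}$, yielding $\tilde z := \tilde x + \sum_{j \in \tilde I} \tilde p_j \in B$ with ${\rm dist}_B(\tilde x, \tilde z) = 2$ and $\sum_{j \in \tilde I} \tilde w_j \ge \min\{0, W\}$.

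Finally I would lift $\tilde I$ to a solution $I \subseteq [\ell]$ of the original instance, contradicting the counterexample hypothesis. Mirroring the case split of Claim~\ref{clm:03}, set
\[
I := \begin{cases} (\tilde I \setminus \{i''\}) \cup \{i, i'\} & \text{if } i'' \in \tilde I, \\ \tilde I & \text{otherwise.}\end{cases}
\]
When $i'' \notin \tilde I$, the lift is immediate: $z := x + \sum_{j \in I} p_j = \tilde z \in B$, ${\rm dist}_B(x, z) = {\rm dist}_B(\tilde x, \tilde z) = 2$, and $w(I) = w(\tilde I) \ge \min\{0, W\}$, so $I$ solves the original instance. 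The main obstacle is the case $i'' \in \tilde I$: here $z = \tilde z + 2\chi_a$, and the membership $z(a) = \tilde z(a) + 2 \in B(a)$ is not automatic because $a \in U$ (in contrast to $c \notin U$ exploited in Claim~\ref{clm:03}), failing precisely when $\tilde z(a) = \gamma$. To dispatch this residual subcase, I would exploit the odd parity of $\sum_{j \in \tilde I \setminus \{i''\}}(p_j)_a$ (forced since $\tilde z(a) - \tilde x(a) - 1$ is odd) to extract $k^{\ast} \in \tilde I \setminus \{i''\}$ with $p_{k^{\ast}} = \chi_a + \chi_d$, and modify the lift, for instance to $(\tilde I \setminus \{i''\}) \cup \{i'\}$ or $(\tilde I \setminus \{i'', k^{\ast}\}) \cup \{i, i'\}$, so that the potential $B$-violation is shifted out of coordinate $a$; when the sign of $w_i$ makes it preferable, one instead invokes the symmetric $x$-shift merge via Remark~\ref{rem:symmetry}. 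Tracking the weight condition $w(I) \ge \min\{0, W\}$ through this case analysis is the anticipated technical crux.
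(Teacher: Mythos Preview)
Your reduction (merge $p_i$ and $p_{i'}$ into $\tilde p_{i''}=\chi_a+\chi_c$ and shrink $y$ to $y-2\chi_a$) is exactly the paper's Case~2 reduction, but the paper only applies it under the extra hypothesis $x(a)+1\in B(a)$, which forces $x(a)=\gamma$. That hypothesis is precisely what makes the lift go through: when $i''\in\tilde I$ (equivalently $i'\in I$), the contribution of $p_{i'}$ already gives $\tilde z(a)\ge x(a)+1=\gamma+1$, so $\tilde z(a)$ lies in the \emph{upper} parity interval and $\tilde z(a)+2\le y(a)$ is automatically in $B(a)$. Your bad case $\tilde z(a)=\gamma$ simply cannot occur. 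In the complementary subcase $x(a)+1\notin B(a)$ (hence $x(a)+2\in B(a)$), the paper uses a different, simpler reduction: remove $p_i$ and replace $x$ by $x+2\chi_a$, lifting any solution $I$ of the smaller instance to $I\cup\{i\}$; the weight inequality then follows from $w_i\ge 0$, which the paper secured in advance by Remark~\ref{rem:symmetry}.

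Your attempt uses the symmetry of Remark~\ref{rem:symmetry} to normalize the \emph{geometry} ($y(a)\ge\gamma+3$) rather than the \emph{sign of $w_i$}, and this is where the argument breaks. In your residual case $\tilde z(a)=\gamma$ with $i''\in\tilde I$, neither proposed patch closes the gap: the lift $(\tilde I\setminus\{i''\})\cup\{i'\}$ gives $z=\tilde z$ with $w(I)=\tilde w(\tilde I)-w_i$, which fails when $w_i>0$; the lift $(\tilde I\setminus\{i'',k^\ast\})\cup\{i,i'\}$ gives $z=\tilde z+\chi_a-\chi_d$, which moves the potential $B$-violation to coordinate $d$ (and may also change ${\rm dist}_B(x,z)$), with weight $\tilde w(\tilde I)-w_{k^\ast}$ uncontrolled. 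The vague appeal to an ``$x$-shift merge'' at the end does not resolve this, since you have already spent the symmetry to fix the geometry and cannot simultaneously assume $w_i\ge 0$. (It is worth noting that your bad case forces $x(a)\le\gamma-2$, so the paper's $x$-shift reduction is in fact available there; but turning this observation into a clean argument essentially reproduces the paper's case split on $\mathrm{sign}(w_i)$ and on whether $x(a)+1\in B(a)$.)
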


\begin{proof}
Assume to the contrary that $p_i = 2 \chi_a$ for some $a \in U$ with ${\rm dist}_{B(a)} (x(a), y(a))=1$. 
Suppose first that $w_i \ge 0$. 
We consider the following two cases separately. 

\medskip
\noindent
\textbf{Case 1: }
Suppose that $x(a) + 1 \not\in B(a)$, which implies that $x(a) + 2 \in B(a)$. 
We construct a new instance by removing $p_i$ and replacing $x$ with $x + 2 \chi_a \in B$. 
By the minimality of the counterexample, 
the obtained new instance has a solution $I \subseteq [\ell] \setminus \{i\}$. 
Then, we see that $I \cup \{i\}$ is a solution of the original instance, because 
$$
w( I \cup \{i\} ) - \min \{ 0, w ( [\ell] ) \} 
\ge w(I)- \min \{0, w(  [\ell] \setminus \{i\} ) \} 
\ge 0,  
$$
where the first inequality follows from  $w_i  \ge 0$ and the second inequality follows from the fact that $I$ is a solution of the new instance. 
This is a contradiction. 

\medskip
\noindent
\textbf{Case 2: }
Suppose that $x(a) + 1 \in B(a)$, which means that $x(a)$ and $x(a)+1$ are contained in different maximal parity intervals of $B(a)$. 
Since  ${\rm dist}_{B(a)} (x(a), y(a))=1$ and $x(a), y(a) \in B(a)$, it holds that $x(a) \not\equiv y(a) \pmod 2$.
This implies that there exists $i' \in [\ell] \setminus \{i\}$ such that $p_{i'} = \chi_a + \chi_b$ for some $b \in V \setminus \{a\}$. 
We construct a new instance as follows: 
remove $p_i$, replace $w_{i'}$ with $w_{i} + w_{i'}$, and replace $y$ with $y - 2 \chi_a$. 
Note that $y - 2 \chi_a \in B$, because $y(a)$ and $x(a)+1$ are contained in the same maximal parity interval of $B(a)$ and $y(a) \ge x(a) + 3$. 
By the minimality of the counterexample, 
the obtained instance has a solution $I \subseteq [\ell] \setminus \{i\}$. 
Then, we see that the set 
$$
I' := \begin{cases}
I  \cup \{i\}  & \mbox{if $i' \in I$,} \\
I   & \mbox{otherwise}
\end{cases}
$$
is a solution of the original instance, which is a contradiction.

\medskip

These two cases complete the proof for the case when $w_i \ge 0$. 
When $w_i \le 0$, we can apply the same argument as above by changing the roles of $x$ and $y$ (see Remark~\ref{rem:symmetry}). 
Therefore, the claim holds. 
\end{proof}

\begin{claim}\label{clm:01}
For any $i, j \in [\ell]$ with $p_i = p_j$, it holds that $w_i = w_j$.  
\end{claim}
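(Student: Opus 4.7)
The plan is to derive a contradiction with the second clause of the minimality of the counterexample, namely the one that minimizes $|\{(p_i,w_i)\mid i\in[\ell]\}|$. Suppose, toward a contradiction, that there exist $i,j\in[\ell]$ with $p_i=p_j$ but $w_i\ne w_j$; without loss of generality $w_i<w_j$. I would modify the weights by replacing $w_i$ and $w_j$ with their common average, producing a tuple that still admits no solution but has strictly fewer distinct $(p_k,w_k)$ pairs, contradicting the minimal choice of our counterexample.

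Concretely, I would define $w'$ by $w'_i=w'_j=(w_i+w_j)/2$ and $w'_k=w_k$ otherwise. Since $w_i\ne w_j$, the averaged value differs from both $w_i$ and $w_j$, so the two previously distinct pairs $(p_i,w_i)$ and $(p_j,w_j)$ merge into a single pair $(p_i,(w_i+w_j)/2)$, and the count of distinct pairs strictly decreases. Because $x$, $y$, and the $p_k$'s are untouched, $\|y-x\|_1$ is unchanged as well, so the minimality of the original counterexample forces the modified instance $(x,y,(p_k)_{k\in[\ell]},w')$ to admit a solution $I$.

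The main step is then to promote $I$ to a solution of the original instance via a short case split on $|I\cap\{i,j\}|$. Since $\sum_k w'_k=\sum_k w_k$, the threshold $\min\{0,\sum_k w_k\}$ is the same in both instances. If $\{i,j\}\subseteq I$ or $I\cap\{i,j\}=\emptyset$, then $w(I)=w'(I)$ and $I$ itself works. If instead exactly one of $i,j$ lies in $I$, say $i\in I$ and $j\notin I$ (the other subcase is symmetric), I would swap them to form $I':=(I\setminus\{i\})\cup\{j\}$; since $p_i=p_j$, the vector $x+\sum_{k\in I'}p_k$ coincides with $z:=x+\sum_{k\in I}p_k\in B$, so the $B$-membership and ${\rm dist}_B(x,\cdot)=2$ conditions transfer automatically, and a one-line calculation gives $w(I')=w'(I)+(w_j-w_i)/2>w'(I)\ge\min\{0,\sum_k w_k\}$. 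Either subcase yields a solution of the original instance, the desired contradiction.

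I do not anticipate a real obstacle here: this is the standard averaging / smoothing trick for minimal counterexamples, and every step reduces to elementary bookkeeping. The single point that needs care is the verification that the distinct-pair count truly drops, but that follows immediately from $(w_i+w_j)/2\notin\{w_i,w_j\}$, guaranteed by $w_i\ne w_j$.
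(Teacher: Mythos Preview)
Your averaging idea is the right one and mirrors the paper's approach, but the step where you assert that the number of distinct $(p_k,w_k)$ pairs strictly decreases has a gap. You only average the two chosen indices $i$ and $j$, yet there may be other indices $k\notin\{i,j\}$ with $p_k=p_i$ and $w_k\in\{w_i,w_j\}$; then the pairs $(p_i,w_i)$ and $(p_i,w_j)$ are still represented after your modification, while a new pair $(p_i,(w_i+w_j)/2)$ has been introduced. Concretely, take four indices all sharing the same $p$ with weights $0,0,2,2$: the distinct-pair count is $2$, but after averaging one $0$ with one $2$ the weights become $1,0,1,2$ and the count rises to $3$. So the modified instance need not be smaller, and you cannot invoke minimality.

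The remedy is to average over a whole block at once, which is exactly what the paper does: take a \emph{maximal} set $I^*$ of indices sharing a common $p$-value and containing at least two distinct weights, and replace every $w_k$ for $k\in I^*$ by the common mean $w^*=\frac{1}{|I^*|}\sum_{k\in I^*}w_k$. Now all pairs $(p,w_k)$ with $k\in I^*$ collapse to the single pair $(p,w^*)$, so the count genuinely drops. The lifting step is the natural extension of your swap: given a solution $I'$ of the averaged instance, replace $I'\cap I^*$ by the $|I'\cap I^*|$ indices of $I^*$ with largest original weight; since $p$ is constant on $I^*$ the vector $z$ is unchanged, and the sum of the top $|I'\cap I^*|$ original weights is at least $|I'\cap I^*|\cdot w^*=w'(I'\cap I^*)$, so $w(I)\ge w'(I')\ge\min\{0,w([\ell])\}$.
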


\begin{proof}
Let $(x, y, (p_i)_{i \in [\ell]}, w)$ be a minimal counterexample of Lemma~\ref{prop:localimprove}, and 
assume that $p_i = p_j$ does not imply $w_i = w_j$. 
Let $I^*  \subseteq [\ell]$ be a maximal index set such that 
$p_i = p_j$ for any $i, j \in I^*$ and  $w_i \neq w_j$ for some $i, j \in I^*$. 
We denote $I^* = \{i_1, i_2, \dots , i_t\}$, where $w_{i_1} \ge w_{i_2} \ge \dots \ge w_{i_t}$.  
Let $w^* := \frac{1}{t} w( I^* )$. Define $w'_{i} := w^*$ for $i \in I^*$ and $w'_i := w_i$ for $i \in [\ell] \setminus I^*$. 
We note that $w'( [\ell] ) = w( [\ell] )$. 
If there exists a solution $I' \subseteq [\ell]$ for a new instance $(x, y, (p_i)_{i \in [\ell]}, w')$,  
then $I := (I' \setminus I^*) \cup \{i_1, i_2, \dots , i_{|I' \cap I^*|}\}$ is a solution for the original instance $(x, y, (p_i)_{i \in [\ell]}, w)$,  
because $w_{i_1} + w_{i_2} + \dots + w_{i_{|I' \cap I^*|}} \ge |I' \cap I^*| \cdot w^*  = w' ( I' \cap I^* )$ 
implies that $w (I) \ge w( I' )$. 
This shows that instance $(x, y, (p_i)_{i \in [\ell]}, w')$ has no solution, and hence it is a counterexample. 
Since $|\{ (p_i, w'_i) \mid i \in [\ell]\}| < |\{ (p_i, w_i) \mid i \in [\ell]\}|$, this contradics the minimality of $(x, y, (p_i)_{i \in [\ell]}, w)$. 
\end{proof}

Let $I^+  := \{ i \in [\ell] \mid w_i > 0\}$ and $z^+ := x + \sum_{i \in I^+} p_i$. 
By this claim, we observe the following. 

\begin{observation}
\label{obs:01}
For any $i \in I^+$ and for any $j \in [\ell] \setminus I^+$, 
it holds that $p_i \neq p_j$. 
\end{observation}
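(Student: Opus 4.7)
The plan is to obtain Observation~\ref{obs:01} as an immediate corollary of Claim~\ref{clm:01}, which was just established for the minimal counterexample $(x, y, (p_i)_{i \in [\ell]}, w)$. Recall that $I^+ = \{i \in [\ell] \mid w_i > 0\}$, so its complement consists of indices with $w_j \le 0$. I would proceed by contrapositive: rather than directly showing $p_i \neq p_j$ whenever $i \in I^+$ and $j \in [\ell] \setminus I^+$, I would assume $p_i = p_j$ and deduce that $i$ and $j$ must lie on the same side of the cutoff $w > 0$.

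Concretely, suppose for contradiction that there exist $i \in I^+$ and $j \in [\ell] \setminus I^+$ with $p_i = p_j$. By Claim~\ref{clm:01}, the equality $p_i = p_j$ forces $w_i = w_j$. Since $i \in I^+$ we have $w_i > 0$, hence $w_j > 0$ as well, which places $j$ in $I^+$, contradicting $j \in [\ell] \setminus I^+$. So no such pair $(i,j)$ exists.

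There is essentially no obstacle: the content is already packaged into Claim~\ref{clm:01}, and the observation is just a repackaging of that claim in terms of the sign of $w_i$. The statement is worth recording separately only because the set $I^+$ and the vector $z^+ = x + \sum_{i \in I^+} p_i$ are the principal objects manipulated in the remainder of the proof of Lemma~\ref{prop:localimprove}, and the phrasing here is what will be invoked there (e.g., to argue that the multiset $\{p_i \mid i \in I^+\}$ and $\{p_j \mid j \in [\ell]\setminus I^+\}$ are disjoint as multisets of $2$-steps).
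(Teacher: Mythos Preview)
Your proposal is correct and matches the paper's approach exactly: the paper simply states that the observation follows from Claim~\ref{clm:01} (``By this claim, we observe the following''), and your argument spells out precisely that one-line deduction.
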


Since ${\rm dist}_{B} (x, z^+) + {\rm dist}_{B} (y, z^+) = 4$, by changing the roles of $x$ and $y$ if necessary (see Remark~\ref{rem:symmetry}), 
we may assume that ${\rm dist}_{B} (x, z^+) \le 2$.%
\footnote{If we change the roles of $x$ and $y$, then $I^-  := \{ i \in [\ell] \mid w_i < 0\}$ and  $z^- := y - \sum_{i \in I^-} p_i$ play the roles of  $I^+$ and $z^+$, respectively. 
We see that if ${\rm dist}_{B} (x, z^+) \ge 3$, then ${\rm dist}_{B} (y, z^-) \le {\rm dist}_{B} (y, z^+) = 4 - {\rm dist}_{B} (x, z^+) \le 1$.}
Furthermore, since $x(v) = y(v) = z^+(v)$ for $v \in V \setminus U$ by Claim~\ref{clm:03}, it holds that $q(z^+) \le |U| \le {\rm dist}_{B} (x, y) = 4$. 
Since $\|x - z^+\|_1$ is even, Lemma~\ref{lem:parity} shows that ${\rm dist}_{B} (x, z^+) + q(z^+)$ is even. 
Overall, the pair $\phi(z^+) := ({\rm dist}_{B} (x, z^+), q(z^+))$ is one of the following: 
$(0, 0), (0, 2), (0, 4), (1, 1), (1, 3), (2, 0), (2, 2)$, and $(2, 4)$. 
Note that  we denote $\phi(z) := ({\rm dist}_{B} (x, z), q(z)) \in \mathbb Z_{\ge 0}^2$ for $z \in \mathbb Z^V$.

In what follows,  
we derive a contradiction for the cases when $|U|=4$, $|U|=3$, and $|U| \le 2$ in Sections~\ref{sec:U=4}, \ref{sec:U=3}, and \ref{sec:U=2}, respectively. 
In the case analysis, we use the following claim, 
which is obtained by the same argument as Lemma~\ref{lem:reachable}.

\begin{claim}\label{clm:02}
Let $I_0 \subseteq [\ell]$ be an index set such that $z_0 := x+\sum_{i \in I_0} p_i$ satisfies  
$\phi(z_0) \in \{ (0, 0), (0, 2), (1, 1), (2, 0)\}$. 
Then, there exists an index set $I \subseteq [\ell]$ with $I_0 \subseteq I$ such that 
$z := x + \sum_{i \in I} p_i$ is contained in $B$ and 
${\rm dist}_{B} (x, z)= 2$, i.e., $\phi(z) = (2, 0)$. 
\end{claim}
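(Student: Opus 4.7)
The plan is to reproduce, almost verbatim, the iterative augmentation argument from Lemma~\ref{lem:reachable}, with the only change being that we start from the given $I_0$ (with $z_0 = x + \sum_{i \in I_0} p_i$) instead of from $I_0 = \emptyset$, and we halt as soon as we reach $\phi(z) = (2,0)$. Since $x(v) \le y(v)$ for every $v \in V$ (as assumed in Section~\ref{sec:propmin}), each $p_i$ has the form $\chi_a + \chi_b$ for some $a, b \in U$, which is exactly the setting used in Lemma~\ref{lem:reachable}.

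Concretely, I would initialize $I := I_0$ and $z := z_0$, and maintain throughout the loop the invariant $\phi(z) \in \{(0,0), (0,2), (1,1), (2,0)\}$, which holds at initialization by hypothesis. At each step, if $\phi(z) = (2,0)$, return $I$; otherwise $\phi(z) \in \{(0,0), (0,2), (1,1)\}$, and I would pick an index $i \in [\ell] \setminus I$ exactly as in the three bullet-point cases in the proof of Lemma~\ref{lem:reachable}: in case $(0,0)$ any remaining index works; in cases $(0,2)$ and $(1,1)$ one chooses $i$ so that $p_i = \chi_a + \chi_c$ where $a$ is a coordinate on which $z$ and $y$ still disagree. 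The same case analysis verifies that $\phi(z + p_i)$ remains in $\{(0,0), (0,2), (1,1), (2,0)\}$, so the invariant is preserved.

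The two things to check are existence of the required index $i \in [\ell] \setminus I$ and termination at a state with $\phi(z) = (2,0)$. Existence follows from the identity $y - z = \sum_{i \in [\ell] \setminus I} p_i$: whenever $z(a) < y(a)$, some $p_i$ with $i \in [\ell] \setminus I$ must have a $\chi_a$ component, which supplies the required index in each case. For termination, observe that $|I|$ strictly increases each iteration and is bounded by $\ell$, so the loop halts after at most $\ell - |I_0|$ steps; moreover it cannot terminate with $I = [\ell]$, since then $z = y$ would give $\phi(z) = (4,0)$, violating the invariant. Hence the only possible exit is at $\phi(z) = (2,0)$, which yields the desired $I$.

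The main obstacle is a notational one rather than a conceptual one: one must verify that the starting configuration $\phi(z_0) \in \{(0,0), (0,2), (1,1), (2,0)\}$ is genuinely compatible with Lemma~\ref{lem:reachable}'s case analysis (the latter was phrased for $z_0 = x$, hence $\phi(z_0) = (0,0)$). Because that case analysis depends only on the current value of $\phi(z)$ and on the fact that the unused steps still decompose $y - z$, nothing about the proof relies on how $I$ was initialized, so the extension is routine.
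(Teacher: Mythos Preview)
Your proposal is correct and takes essentially the same approach as the paper: the paper does not give a standalone proof of Claim~\ref{clm:02} but simply states that it is obtained by the same argument as Lemma~\ref{lem:reachable}, which is precisely what you carry out. Your observations that $x \le z \le y$ is automatic (so the required index $i$ exists whenever $z(a) < y(a)$) and that $I = [\ell]$ would force $\phi(z) = (4,0)$, contradicting the invariant, supply exactly the details needed to make the reference to Lemma~\ref{lem:reachable} rigorous.
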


\subsection{When $|U| = 4$}
\label{sec:U=4}

Suppose that $|U|=4$, that is, $U = \{v_1, v_2, v_3, v_4\}$ and ${\rm dist}_{B(v_j)} (x(v_j), y(v_j))=1$ for $j\in \{1, 2, 3, 4\}$. 
By Claims~\ref{clm:03} and~\ref{clm:04}, 
for any $i \in [\ell]$,  $p_i = \chi_a + \chi_b$ for some distinct $a, b \in U$. 
As discussed in Section~\ref{sec:propmin}, 
the pair $\phi(z^+) = ({\rm dist}_{B} (x, z^+), q(z^+))$ is one of the following: 
$(0, 0), (0, 2), (0, 4), (1, 1), (1, 3),$  $(2, 0), (2, 2)$, and $(2, 4)$. 
We derive a contradiction by considering each case separately.

\medskip
\noindent
\textbf{Case 1:}  $\phi(z^+) = (0, 0), (0, 2), (1, 1)$, or $(2, 0)$.

By Claim~\ref{clm:02}, 
there exists an index set $I \subseteq [\ell]$ with $I^+ \subseteq I$ such that 
$z := x + \sum_{i \in I} p_i$ is contained in $B$ and 
${\rm dist}_{B} (x, z)= 2$. 
Since $w_i \le 0$ for each $i \in [\ell] \setminus I$, 
we obtain $w( I ) \ge w( [\ell] )$, and hence 
$I$ is a solution of Lemma~\ref{prop:localimprove}. 
This is a contradiction.

\medskip
\noindent
\textbf{Case 2:}  $\phi(z^+) = (0, 4)$.

Let $i \in [\ell] \setminus I^+$. Since $p_i = \chi_a + \chi_b$ for some distinct $a, b \in U$, we obtain
$\phi(z^+ + p_i) = (0, 2)$. 
By Claim~\ref{clm:02}, 
there exists an index set $I \subseteq [\ell]$ with $I^+ \cup \{i\} \subseteq I$ such that 
$z := x + \sum_{j \in I} p_j$ is contained in $B$ and 
${\rm dist}_{B} (x, z)= 2$. 
We see that such $I$ is a solution of Lemma~\ref{prop:localimprove} in the same way as Case 1, 
which is a contradiction. 

\medskip
\noindent
\textbf{Case 3:}  $\phi(z^+) = (1, 3)$.

Without loss of generality, we may assume that $z^+(v_j) \not\in B(v_j)$ for $j \in \{1, 2, 3\}$ and $z^+(v_4) \in B(v_4)$. 
Note that $z^+(v_j) \not\in B(v_j)$ implies $z^+(v_j) \neq x(v_j)$ and $z^+(v_j) \neq y(v_j)$. 
If there exists an index $i \in [\ell] \setminus I^+$ such that 
$p_i \in \{\chi_{v_1} + \chi_{v_2}, \chi_{v_2} + \chi_{v_3}, \chi_{v_3} + \chi_{v_1}\}$, then 
$\phi(z^+ + p_i) = (1, 1)$, and hence 
we can derive a contradiction by applying Claim~\ref{clm:02} in the same say as Case 1.  

Otherwise,  since $z^+(v_j) \not= y(v_j)$ for $j \in \{1, 2, 3\}$, 
there exist $i_1, i_2, i_3 \in [\ell] \setminus I^+$ such that 
$p_{i_1} = \chi_{v_1} + \chi_{v_4}$, 
$p_{i_2} = \chi_{v_2} + \chi_{v_4}$, and 
$p_{i_3} = \chi_{v_3} + \chi_{v_4}$. 
We consider the following two cases separately. 

\begin{itemize}
\item
Suppose that $z^+(v_4) + 1 \not\in B(v_4)$, which implies that $z^+(v_4) + 2 \in B(v_4)$. 
In this case, we obtain $\phi(z^+ + p_{i_1} + p_{i_2}) = (1, 1)$, 
and hence we can derive a contradiction by applying Claim~\ref{clm:02} to $I^+ \cup \{i_1, i_2\}$ in the same way as Case 1. 

\item
Suppose that $z^+(v_4) + 1 \in B(v_4)$, which means that $z^+(v_4)$ and $z^+(v_4)+1$ are contained in different maximal parity intervals of $B(v_4)$. 
Since ${\rm dist}_{B(v_4)} (x(v_4), y(v_4))=1$, this shows that $z^+(v_4)+3 \in B(v_4)$, and hence $\phi(z^+ + p_{i_1} + p_{i_2} + p_{i_3}) = (2, 0)$. 
Therefore, $I := I^+ \cup \{i_1, i_2, i_3\}$ is a solution, because $w(I) \ge w([\ell])$. 
This is a contradiction. 
\end{itemize}

\medskip
\noindent
\textbf{Case 4:}   $\phi(z^+) = (2, 2)$.

Without loss of generality, we assume that 
$z^+(v_j) \not\in B(v_j)$ for $j \in \{1, 2\}$ and $z^+(v_j) \in B(v_j)$ for $j \in \{3, 4\}$. 
If there exists $i \in I^+$ such that $p_{i} = \chi_{v_1} + \chi_{v_2}$, then 
$I := I^+ \setminus \{i\}$ is a solution, because $w( I ) \ge 0$. 
If there exists $i \in [\ell] \setminus I^+$ such that $p_{i} = \chi_{v_1} + \chi_{v_2}$, then 
$I := I^+ \cup \{i\}$ is a solution, because $w ( I )  \ge w( [\ell] )$. 
Therefore, it suffices to consider the case when  $p_i \neq \chi_{v_1} + \chi_{v_2}$ for any $i \in [\ell]$. 

Since $z^+(v_1) \not= x(v_1)$, there exists an index $i_1 \in I^+$ such that $p_{i_1} = \chi_{v_1} + \chi_{v}$ for some $v \in \{v_3, v_4\}$. 
Similarly, since $z^+(v_1) \not= y(v_1)$, there exists an index $i_2 \in [\ell] \setminus I^+$ such that $p_{i_2} = \chi_{v_1} + \chi_{v}$ for some $v \in \{v_3, v_4\}$. 
By Observation~\ref{obs:01}, 
by changing the roles of $v_3$ and $v_4$ if necessary, we may assume that $p_{i_1} = \chi_{v_1} + \chi_{v_3}$ and $p_{i_2} = \chi_{v_1} + \chi_{v_4}$. 
By applying the same argument to $v_2$ instead of $v_1$, 
there exist indices $i_3 \in I^+$ and $i_4 \in [\ell] \setminus I^+$ such that 
$p_{i_3} = \chi_{v_2} + \chi_{v}$ for some $v \in \{v_3, v_4\}$ and $p_{i_4} = \chi_{v_2} + \chi_{v}$ for some $v \in \{v_3, v_4\}$.  
By Observation~\ref{obs:01} again, 
we have either (i) $p_{i_3} = \chi_{v_2} + \chi_{v_4}$ and $p_{i_4} = \chi_{v_2} + \chi_{v_3}$, or 
(ii) $p_{i_3} = \chi_{v_2} + \chi_{v_3}$ and $p_{i_4} = \chi_{v_2} + \chi_{v_4}$.  
We consider each case separately.

\begin{itemize}
\item
Suppose that  $p_{i_3} = \chi_{v_2} + \chi_{v_4}$ and $p_{i_4} = \chi_{v_2} + \chi_{v_3}$; see Figure~\ref{fig:04} (left).\footnote{In figures, a blue edge $(u, v)$ corresponds to an element $i \in [\ell] \setminus I^+$ with $p_i = \chi_u + \chi_v$, 
a red dashed edge $(u, v)$ corresponds to an element $i \in I^+$ with $p_i = \chi_u + \chi_v$, and 
a vertex $v \in V$ in a rectangle satisfies that $z^+(v) \not\in B(v)$.} 
Define  $I_1:= (I^+ \setminus \{i_1\}) \cup \{i_4\} $ and  $I_2:= (I^+ \setminus \{i_2\} ) \cup \{i_3\}$. 
Since $z^+(v_1) \pm 1 \in B(v_1)$ and $z^+(v_2) \pm 1 \in B(v_2)$, 
$z_h := x + \sum_{i \in I_h} p_i$ is contained in $B$ and ${\rm dist}_{B} (x, z_h) = 2$ for $h \in \{1, 2\}$.  
Furthermore, we obtain 
\begin{align}
\max  \big\{ w( I_1 ), w( I_2 )  \big\} 
& \ge \frac{1}{2} \big( w (  I_1 ) + w( I_2 )  \big) \notag \\
& = \frac{1}{2} \Big( w(  [\ell] )  + w( I_1 \cap I_2 )  - w \big( [\ell] \setminus (I_1 \cup I_2) \big) \Big) \label{eq:01} \\
& \ge \frac{1}{2}  w( [\ell] )  \ge \min \big\{0,  w( [\ell] ) \big\}, \notag  
\end{align}
where we use $I_1 \cap I_2 \subseteq I^+ \subseteq I_1 \cup I_2$ in the second inequality. 
This shows that at least one of $I_1$ and $I_2$ is a solution, which is a contradiction. 

\item
Suppose that  $p_{i_3} = \chi_{v_2} + \chi_{v_3}$ and $p_{i_4} = \chi_{v_2} + \chi_{v_4}$; see Figure~\ref{fig:04} (right). 
If $z^+(v_3) - 1 \not\in B(v_3)$, then $z^+(v_3) - 2 \in B(v_3)$, and hence 
$I := I^+ \setminus \{i_1, i_3\}$ is a solution, because $w( I ) \ge 0$. 
If $z^+(v_4) + 1 \not\in B(v_4)$, then $z^+(v_4) + 2 \in B(v_4)$, and hence 
$I := I^+ \cup \{i_2, i_4\}$ is a solution, because $w( I )  \ge w( [\ell] )$. 
Therefore, we have that $z^+(v_3) - 1 \in B(v_3)$ and $z^+(v_4) + 1 \in B(v_4)$, that is, 
$z^+(v_3) - 1$ and $z^+(v_3)$ are contained in different maximal parity intervals of $B(v_3)$, and 
$z^+(v_4)$ and $z^+(v_4) + 1$ are contained in different maximal parity intervals of $B(v_4)$ (Figure~\ref{fig:05}). 
Define $I_1:= (I^+ \setminus \{i_1\}) \cup \{i_4\} $ and  $I_2:= (I^+ \setminus \{i_3\} ) \cup \{i_2\}$. 
Since  $z_h := x + \sum_{i \in I_h} p_i$ is contained in $B$ and ${\rm dist}_{B} (x, z_h) = 2$ for $h \in \{1, 2\}$, 
by the same calculation as (\ref{eq:01}), we see that at least one of $I_1$ and $I_2$ is a solution. 
This is a contradiction. 
\end{itemize}

\begin{figure}
    \centering
    \includegraphics[width=4cm]{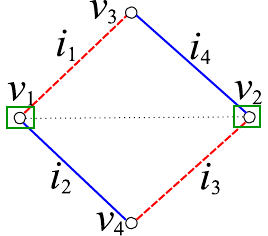}
\qquad \qquad
    \includegraphics[width=4cm]{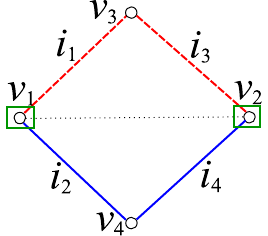}
    \caption{Possible situations in Case 4.}
    \label{fig:04}
\end{figure}

\begin{figure}
    \centering
    \includegraphics[width=5.5cm]{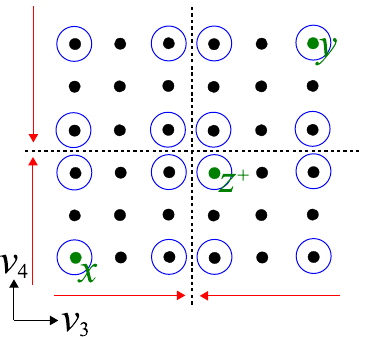}
    \caption{Projection of $z^+$ onto the $v_3$-$v_4$ plane.}
    \label{fig:05}
\end{figure}

\medskip
\noindent
\textbf{Case 5:}  $\phi(z^+) = (2, 4)$.

In this case, $z^+(v) \not\in B(v)$ for $v \in U$. 
For $v \in U$, since $z^+(v) \not= x(v)$, 
there exists $i \in I^+$ such that $p_i = \chi_{v} + \chi_{u}$ for some $u \in U \setminus \{v\}$.
If there exist $i_1, i_2 \in I^+$ with $p_{i_1} + p_{i_2} = \chi_{v_1} + \chi_{v_2} + \chi_{v_3} + \chi_{v_4}$, then 
$I := I^+ \setminus \{i_1, i_2\}$ is a solution, because $w ( I ) \ge 0$, which is a contradiction. 
Otherwise, consider a graph $G=(U, E)$ such that the vertex set is $U$ and $(v, v') \in E$ if there exists $i \in I^+$ with $p_i = \chi_{v} + \chi_{v'}$ (i.e., the graph consisting of the red edges). 
Since the conditions mean that $G$ has no isolated vertex and has no perfect matching, $G$ must be isomorphic to a claw  i.e., a $K_{1,3}$. 
Thus, by changing the roles of $v_1, v_2, v_3$, and $v_4$ if necessary, 
there exist $i_1, i_2, i_3 \in I^+$ such that
$p_{i_1} = \chi_{v_1} + \chi_{v_4}$, 
$p_{i_2} = \chi_{v_2} + \chi_{v_4}$, and 
$p_{i_3} = \chi_{v_3} + \chi_{v_4}$ (Figure~\ref{fig:06}).
This together with Observation~\ref{obs:01} shows that 
there exists no index $i \in [\ell] \setminus I^+$ such that $p_i = \chi_{v_4} + \chi_{v}$ with $v \in \{v_1, v_2, v_3\}$, 
which implies $z^+(v_4) = y(v_4)$. 
This is a contradiction, because $z^+(v_4) \not\in B(v_4)$ and $y(v_4) \in B(v_4)$. 

\begin{figure}[t]
    \centering
    \includegraphics[width=3.8cm]{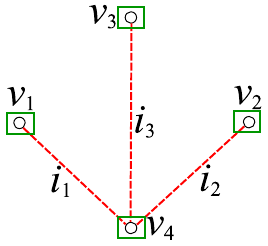}
    \caption{Red edges in Case 5.}
    \label{fig:06}
\end{figure}

\subsection{When $|U|=3$}
\label{sec:U=3}

Suppose that $|U|=3$. Let $U = \{v_1, v_2, v_3\}$ such that ${\rm dist}_{B(v_1)} (x(v_1), y(v_1)) = {\rm dist}_{B(v_2)} (x(v_2), y(v_2))=1$ and ${\rm dist}_{B(v_3)} (x(v_3), y(v_3))=2$. 
By Claims~\ref{clm:03} and~\ref{clm:04}, 
for any $i \in [\ell]$,  either $p_i = \chi_a + \chi_b$ for some distinct $a, b \in U$ or $p_i = 2 \chi_{v_3}$. 
As discussed in Section~\ref{sec:propmin}, since $q(z^+) \le |U| = 3$,  
the pair $\phi(z^+) = ({\rm dist}_{B} (x, z^+), q(z^+))$ is one of the following: 
$(0, 0), (0, 2), (1, 1), (1, 3), (2, 0)$, and $(2, 2)$. 
We derive a contradiction by considering each case separately.

\medskip
\noindent
\textbf{Case 1:} $\phi(z^+) =  (0, 0), (0, 2), (1, 1)$, or $(2, 0)$. 

In this case, we obtain a solution of Lemma~\ref{prop:localimprove} by applying Claim~\ref{clm:02} 
in the same way as Case 1 in Section~\ref{sec:U=4}, which is a contradiction.

\medskip
\noindent
\textbf{Case 2:} $\phi(z^+) =  (1, 3)$. 

In this case, $z^+(v) \not\in B(v)$ for $v \in U$. 
Since $z^+(v_1) \not= y(v_1)$, 
there exists $i \in [\ell] \setminus I^+$ such that $p_i = \chi_{v_1} + \chi_{u}$ for some $u \in \{v_2, v_3\}$.
Then, since $\phi(z^+ + p_i) = (1, 1)$, 
we obtain a solution by applying Claim~\ref{clm:02} to $I^+ \cup \{i\}$ in the same way as Case 1 in Section~\ref{sec:U=4}, 
which is a contradiction.

\medskip
\noindent
\textbf{Case 3:} $\phi(z^+) =  (2, 2)$. 

Since $q(z^+) = 2$ and $|U| = 3$, at least one of $z^+(v_1) \not\in B(v_1)$ and $z^+(v_2) \not\in B(v_2)$ holds. 
By changing the roles of $v_1$ and $v_2$ if necessary, we may assume that $z^+(v_1) \not\in B(v_1)$.  
Let $v^* \in \{v_2, v_3\}$ be the other element such that $z^+(v^*) \not\in B(v^*)$. 
Since $z^+(v_1) \not= x(v_1)$, 
there exists $i_1 \in I^+$ such that $p_{i_1} = \chi_{v_1} + \chi_{u}$ for some $u \in \{v_2, v_3\}$.
Similarly, since $z^+(v_1) \not= y(v_1)$, 
there exists $i_2 \in [\ell] \setminus I^+$ such that $p_{i_2} = \chi_{v_1} + \chi_{u}$ for some $u \in \{v_2, v_3\}$.
By Observation~\ref{obs:01}, 
either $p_{i_1} = \chi_{v_1} + \chi_{v^*}$ or $p_{i_2} = \chi_{v_1} + \chi_{v^*}$ holds (Figure~\ref{fig:07}). 
If  $p_{i_1} = \chi_{v_1} + \chi_{v^*}$, then $I := I^+ \setminus \{i_1\}$ is a solution, because $w ( I ) \ge 0$, which is a contradiction; see Figure~\ref{fig:07} (left two). 
If  $p_{i_2} = \chi_{v_1} + \chi_{v^*}$, then $I := I^+ \cup \{i_2\}$ is a solution, because $w( I  ) \ge w( [\ell] )$, which is a contradiction; see Figure~\ref{fig:07} (right two).

\begin{figure}
    \centering
    \includegraphics[width=13cm]{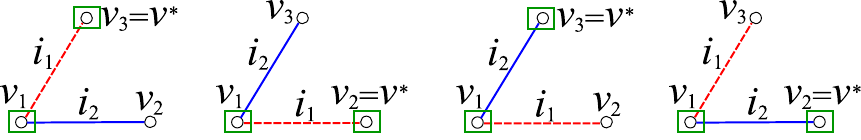}
    \caption{Possible situations in Case 3.}
    \label{fig:07}
\end{figure}


\subsection{When $|U| \le 2$}
\label{sec:U=2}

Suppose that $|U|\le 2$. 
As discussed in Section~\ref{sec:propmin}, since $q(z^+) \le |U| \le 2$,   
the pair $\phi(z^+) = ({\rm dist}_{B} (x, z^+), q(z^+))$ is one of the following: 
$(0, 0), (0, 2), (1, 1), (2, 0)$, and $(2, 2)$. 

If $\phi(z^+) =  (0, 0), (0, 2), (1, 1)$, or $(2, 0)$, then 
we obtain a solution of Lemma~\ref{prop:localimprove} 
in the same way as Case 1 in Section~\ref{sec:U=4}, which is a contradiction. 

Thus, the remaining case is when $\phi(z^+) = (2, 2)$, which implies that $|U|=2$ and $z^+(v) \not\in B(v)$ for $v \in U$. 
Let $U=\{v_1, v_2\}$. 
By Claim~\ref{clm:03}, for any $i \in [\ell]$, it holds that $p_i = \chi_a + \chi_b$ for some $a, b \in U$. 
Since $z^+(v_1) \not= x(v_1)$, 
there exists $i_1 \in I^+$ such that $p_{i_1} = \chi_{v_1} + \chi_{u}$ for some $u \in \{v_1, v_2\}$.
Similarly, since $z^+(v_1) \not= y(v_1)$, 
there exists $i_2 \in [\ell] \setminus I^+$ such that $p_{i_2} = \chi_{v_1} + \chi_{u}$ for some $u \in \{v_1, v_2\}$.
By Observation~\ref{obs:01}, 
either $p_{i_1} = \chi_{v_1} + \chi_{v_2}$ or $p_{i_2} = \chi_{v_1} + \chi_{v_2}$ holds. 
If  $p_{i_1} = \chi_{v_1} + \chi_{v_2}$, then $I := I^+ \setminus \{i_1\}$ is a solution, because $w(I) \ge 0$, which is a contradiction. 
If  $p_{i_2} = \chi_{v_1} + \chi_{v_2}$, then $I := I^+ \cup \{i_2\}$ is a solution, because $w(I)  \ge w( [\ell] )$, which is a contradiction.

\section{Extension to Valuated Problem}
\label{sec:valuated}

In this section, we consider the following valuated version of {\sc Jump System Intersection}.

\begin{problem}
\problemtitle{\sc Valuated Jump System Intersection}
\probleminput{A function $f \colon J \to \mathbb{Z}$ on a jump system $J \subseteq \mathbb Z^V$ and a finite one-dimensional jump system $B(v) \subseteq \mathbb{Z}$ for each $v \in V$.}
\problemoutput{Find a vector $x \in B$ maximizing $f(x)$, where $B \subseteq \mathbb Z^V$ is the direct product of $B(v)$'s. }
\end{problem}

Note that $f$ and $J$ may be given in an implicit way, e.g., by an oracle. 
To simplify the notation, we extend the domain of $f$ to $\mathbb{Z}^V$ 
by setting $f(x) = - \infty$ for $x \in \mathbb{Z}^V \setminus J$. 
The following property is a quantitative extension of  (SBO-JUMP). 

\begin{description}
\item[(SBO-M-JUMP)]
For any $x, y \in J$,  
there exist a $2$-step decomposition  $\{p_1, \dots , p_\ell \}$ of $y-x$ 
and $g_i \in \mathbb{R}$ for $i \in [\ell]$
such that 
$f(x + \sum_{i \in I} p_i) \ge f(x) + \sum_{i \in I} g_i$ for any $I \subseteq [\ell]$ and 
$f(y) = f(x) + \sum_{i \in [\ell]} g_i$.
\end{description}

Note that we use ``M'' in the name of the exchange axiom, because it defines a subclass of 
{\em M-concave functions on constant parity jump systems}~\cite{MurotaJumpM}; see Remark~\ref{rem:mjump} below. 
We can see that if $f$ satisfies (SBO-M-JUMP), then its effective domain $J := \{ x \in \mathbb{Z}^V \mid f(x) > - \infty \}$ satisfies (SBO-JUMP). 
By using (SBO-M-JUMP), we  generalize Theorem~\ref{thm:main} as follows.

\begin{theorem}
\label{thm:mainweighted}
There is an algorithm for {\sc Valuated Jump System Intersection} 
whose running time is polynomial in $\sum_{v \in V} \sum_{\alpha \in B(v)} \log ( | \alpha | + 1) + \max_{x \in J} \log ( |f(x)| + 1)$
if the following properties hold: 
\begin{enumerate}
\item[(C1')]
a vector $x_0 \in J \cap B$ is given,  
\item[(C2')]
$f$ satisfies (SBO-M-JUMP), and  
\item[(C3')]
for any direct product  $B' \subseteq \mathbb Z^V$ of parity intervals, 
there is an oracle for finding a vector $x \in J \cap B'$ maximizing $f(x)$. 
\end{enumerate}
\end{theorem}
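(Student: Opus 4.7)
The plan is to run essentially the same algorithm as Algorithm~\ref{alg:01}, with the linear objective $c^\top x$ replaced by $f(x)$ throughout: start from $x := x_0$, and in each iteration find $x' \in J \cap B$ maximizing $f(x')$ subject to ${\rm dist}_B(x, x') \le 2$, terminating when no strict improvement exists. As in the proof of Theorem~\ref{thm:main}, the neighborhood $\{x' \in B \mid {\rm dist}_B(x, x') \le 2\}$ decomposes into $O(|V|^2)$ direct products of parity intervals, so each iteration reduces to $O(|V|^2)$ calls to the oracle in (C3').

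The heart of the analysis is a valuated analog of Lemma~\ref{lem:updateratio}: given an optimal $x^* \in J \cap B$ and a non-optimal $x \in J \cap B$, the best 2-step neighbor $x'$ satisfies
\[
f(x') - f(x) \ge \frac{2}{\|x^* - x\|_1}\bigl(f(x^*) - f(x)\bigr).
\]
To prove this, one invokes (SBO-M-JUMP) on the pair $(x, x^*)$ to obtain a $2$-step decomposition $\{p_1, \dots, p_\ell\}$ together with real values $g_1, \dots, g_\ell$ summing to $f(x^*) - f(x)$, and sets $w_i := g_i - \frac{f(x^*) - f(x)}{\ell} + \varepsilon$ with $\varepsilon$ a small tie-breaker, exactly as in the linear case. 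Since the effective domain of $f$ inherits (SBO-JUMP), Lemma~\ref{lem:reachable} still produces a chain $\emptyset = I_0 \subsetneq I_1 \subsetneq \cdots \subsetneq I_k = [\ell]$ with $z_j := x + \sum_{i \in I_j} p_i \in J \cap B$ and ${\rm dist}_B(z_{j-1}, z_j) = 2$, which we select to be lex-maximum in $w$. The argument of Lemma~\ref{lem:updateratio}, invoking Lemma~\ref{prop:localimprove} on consecutive triples $(z_{j-2}, z_j)$, then transfers verbatim and forces the first index $j$ with $w(I_{j-1}) < w(I_j)$ to be $j = 1$, so $w(I_1) > 0$. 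Unwinding the definition of $w$ yields $\sum_{i \in I_1} g_i \ge \frac{(f(x^*) - f(x))|I_1|}{\ell}$, and applying (SBO-M-JUMP) gives $f(z_1) - f(x) \ge \sum_{i \in I_1} g_i$, which, combined with $f(x') \ge f(z_1)$, $|I_1| \ge 1$, and $\ell = \frac{1}{2}\|x^* - x\|_1$, gives the desired bound.

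With this update lemma in hand, the remainder follows the template of Theorem~\ref{thm:main}: a feasible solution is globally optimal iff it is locally 2-step optimal (the analog of Corollary~\ref{cor:optimality}), and the optimality gap $f(x^*) - f(x)$ contracts by a factor of at most $1 - \frac{1}{B_{\rm size}}$ per iteration, bounding the number of iterations by $O(B_{\rm size} \log(f(x^*) - f(x_0)))$ and yielding the claimed running time. The main obstacle, and the only substantive difference between the valuated setting and the linear one, is that the identity $c^\top z_1 = c^\top x + \sum_{i \in I_1} c^\top p_i$ becomes merely the one-sided inequality $f(z_1) \ge f(x) + \sum_{i \in I_1} g_i$ supplied by (SBO-M-JUMP). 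It must be verified that this is precisely the direction used throughout the argument, since the $w_i$ are employed only to lower-bound $f$ at the intermediate points $z_j$ along the chain. Crucially, Lemma~\ref{prop:localimprove} is already phrased with arbitrary real weights $w_i$ and makes no reference to a linear objective, so it is reusable as a black box without any modification.
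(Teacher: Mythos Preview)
Your proposal is correct and matches the paper's own proof essentially line for line: the paper introduces Algorithm~\ref{alg:02} (Algorithm~\ref{alg:01} with $c^\top x$ replaced by $f(x)$), states and proves Lemma~\ref{lem:updateratioweighted} by invoking (SBO-M-JUMP) to produce $p_i$ and $g_i$, setting $w_i = g_i - \frac{f(x^*)-f(x)}{\ell} + \varepsilon$, and then reusing the argument of Lemma~\ref{lem:updateratio} verbatim, before concluding via the analog of Corollary~\ref{cor:optimality} and the same $1 - \frac{1}{B_{\rm size}}$ contraction bound. Your observation that the only substantive change is the replacement of the linear identity by the one-sided inequality from (SBO-M-JUMP), and that this is exactly the direction required, is precisely the point the paper leaves implicit.
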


Our algorithm for {\sc Valuated Jump System Intersection} is almost the same as Algorithm~\ref{alg:01}, 
where we just replace the objective function $c^\top x$ with $f(x)$; see Algorithm~\ref{alg:02}. 
In order to prove Theorem~\ref{thm:mainweighted}, we extend Lemma~\ref{lem:updateratio} to the valuated case as follows.

\begin{algorithm}
    \KwInput{$f \colon J \to \mathbb Z^V$, $B$, and $x_0$.} 
    \KwOutput{$x \in J \cap B$ maximizing $f(x)$.}
	$x \gets x_0$\;  
	\While{\rm true}{ 
		Find a vector $x' \in J \cap B$ maximizing $f(x')$ subject to ${\rm dist}_{B} (x, x') \le 2$\;
		\If{$f(x') = f(x)$}{\Return{$x$}}
		$x \gets x'$\; 
    	}
    \caption{Algorithm for {\sc Valuated Jump System Intersection} }\label{alg:02}
\end{algorithm}

\begin{lemma}
\label{lem:updateratioweighted}
Let $f \colon J \to \mathbb Z$ be a function satisfying (SBO-M-JUMP), 
let $x^* \in J \cap B$ be an optimal solution of {\sc Valuated Jump System Intersection}, and  
let $x \in J \cap B$ be a vector with $x \neq x^*$. 
Let $x' \in J \cap B$ be a vector maximizing $f(x')$ subject to ${\rm dist}_{B} (x, x') \le 2$. 
Then, $f(x') - f(x) \ge \frac{2}{\|x^* - x \|_1} \left( f( x^*) - f(x) \right)$.  
\end{lemma}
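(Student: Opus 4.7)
My plan is to mirror the proof of Lemma~\ref{lem:updateratio} nearly verbatim, with the auxiliary numbers $g_i$ supplied by (SBO-M-JUMP) playing the role that $c^\top p_i$ plays in the linear case.

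I would first dispose of the case ${\rm dist}_B(x,x^*)=2$, where the inequality is immediate since $x^*$ itself is a candidate for $x'$. Assuming ${\rm dist}_B(x,x^*)\ge 4$ (even by Lemma~\ref{lem:parity}), I would apply (SBO-M-JUMP) to $x,x^*\in J$ to obtain a $2$-step decomposition $\{p_1,\dots,p_\ell\}$ of $x^*-x$ and reals $g_1,\dots,g_\ell$ such that $f(x+\sum_{i\in I}p_i)\ge f(x)+\sum_{i\in I}g_i$ for every $I\subseteq[\ell]$ and $\sum_{i\in[\ell]}g_i=f(x^*)-f(x)$. A first consequence is that every vector $x+\sum_{i\in I}p_i$ lies in $J$, since its $f$-value is finite. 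I would then set $w_i:=g_i-\frac{f(x^*)-f(x)}{\ell}+\varepsilon$ for a sufficiently small generic $\varepsilon>0$ (chosen so that $|I|\ne|I'|$ forces $w(I)\ne w(I')$) and apply Lemma~\ref{lem:reachable} to produce a chain $\emptyset=I_0\subsetneq I_1\subsetneq\cdots\subsetneq I_k=[\ell]$ with $z_j:=x+\sum_{i\in I_j}p_i\in B\cap J$ and ${\rm dist}_B(z_{j-1},z_j)=2$, choosing the intermediate sets to maximize $(w(I_1),\dots,w(I_{k-1}))$ lexicographically.

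The core of the argument would then be to pick the smallest $j$ with $w(I_{j-1})<w(I_j)$ (such a $j$ exists because $w(I_0)=0<\varepsilon\ell=w(I_k)$) and rule out $j\ne 1$. If $j\ne 1$ then $w(I_{j-2})>w(I_{j-1})<w(I_j)$, and applying Lemma~\ref{prop:localimprove} to $z_{j-2},z_j$ with the $2$-step decomposition $\{p_i\mid i\in I_j\setminus I_{j-2}\}$ and weights $\{w_i\}$ delivers $I\subseteq I_j\setminus I_{j-2}$ with $z'_{j-1}:=z_{j-2}+\sum_{i\in I}p_i\in B$, ${\rm dist}_B(z_{j-2},z'_{j-1})=2$, and $w(I)\ge\min\{0,w(I_j\setminus I_{j-2})\}$. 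Setting $I'_{j-1}:=I_{j-2}\cup I$ gives $z'_{j-1}=x+\sum_{i\in I'_{j-1}}p_i\in J$ and
\[
w(I'_{j-1})=w(I_{j-2})+w(I)\ge\min\{w(I_{j-2}),w(I_j)\}>w(I_{j-1}),
\]
contradicting lexicographic maximality. Hence $j=1$ and $w(I_1)>0$; letting $\varepsilon\to 0$ yields $\sum_{i\in I_1}g_i\ge\frac{|I_1|}{\ell}(f(x^*)-f(x))$. Combining this with the (SBO-M-JUMP) lower bound $f(z_1)\ge f(x)+\sum_{i\in I_1}g_i$, with $f(x')\ge f(z_1)$ (since $z_1\in J\cap B$ and ${\rm dist}_B(x,z_1)\le 2$), with $|I_1|\ge 1$, and with $\ell=\|x^*-x\|_1/2$, will give the claimed inequality.

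The main obstacle I anticipate is propagating the chain-and-lex-max machinery, originally driven by a linear functional, through the surrogate $g_i$'s. This is precisely what (SBO-M-JUMP) is designed for: its normalization $\sum_i g_i=f(x^*)-f(x)$ makes the scaling in $w_i$ meaningful, its one-sided lower inequality converts the surrogate improvement $w(I_1)>0$ into a genuine improvement of $f$, and that same lower bound automatically places each intermediate vector $x+\sum_{i\in I}p_i$ into the effective domain $J$, so that Lemmas~\ref{lem:reachable} and~\ref{prop:localimprove} can be invoked without any further structural argument about $J$.
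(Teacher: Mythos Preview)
Your proposal is correct and follows essentially the same approach as the paper: the paper's own proof simply invokes (SBO-M-JUMP) to obtain the $g_i$'s, defines $w_i = g_i - \frac{f(x^*)-f(x)}{\ell} + \varepsilon$, and then appeals verbatim to the argument of Lemma~\ref{lem:updateratio}, which is exactly what you spell out in detail. Your observation that the lower bound in (SBO-M-JUMP) forces each intermediate $x+\sum_{i\in I}p_i$ into $J$ is the correct substitute for the (SBO-JUMP) step used in the linear case.
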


\begin{proof}
Since $x, x^* \in J$, there exists a $2$-step decomposition $\{p_1, \dots , p_\ell\}$ of $x^*-x$ and $g_i \in \mathbb{R}$ for $i \in [\ell]$ 
such that 
$f(x + \sum_{i \in I} p_i) \ge f(x) + \sum_{i \in I} g_i$ for any $I \subseteq [\ell]$ and 
$f(x^*) = f(x) + \sum_{i \in [\ell]} g_i$.
For $i \in [\ell]$, we define $w_i = g_i - \frac{f(x^*) - f(x) }{\ell} + \varepsilon$, where $\varepsilon$ is a sufficiently small positive number. 
Then, by the same argument as the proof of Lemma~\ref{lem:updateratio}, 
we obtain $f(x') - f(x) \ge \frac{2}{\|x^* - x \|_1} \left( f( x^*) - f(x) \right)$. 
\end{proof}

This implies that
the global optimality is guaranteed by the local optimality as follows.

\begin{corollary}
\label{cor:weightlocalopt}
In an instance of {\sc Valuated Jump System Intersection} with (C2'), 
a feasible solution $x \in J \cap B$ maximizes $f(x)$ if and only if 
$f(x) \ge f(x')$ for any $x' \in J \cap B$ with ${\rm dist}_{B} (x, x') \le 2$. 
\end{corollary}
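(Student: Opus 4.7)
The plan is to deduce this corollary from Lemma~\ref{lem:updateratioweighted} in exactly the same way Corollary~\ref{cor:optimality} was deduced from Lemma~\ref{lem:updateratio}. All the technical work has already been done in the lemma; this corollary is essentially a logical rearrangement.

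The ``only if'' direction is immediate: if $x$ maximizes $f$ over all of $J \cap B$, then in particular $f(x) \ge f(x')$ for every $x' \in J \cap B$ satisfying ${\rm dist}_B(x, x') \le 2$. So I would spend at most one line on this direction.

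For the ``if'' direction, I would fix a \emph{locally} optimal $x \in J \cap B$, meaning $f(x) \ge f(x')$ for every $x' \in J \cap B$ with ${\rm dist}_B(x, x') \le 2$, and let $x^* \in J \cap B$ be a global maximizer of $f$; the goal is to show $f(x) = f(x^*)$. If $x = x^*$ there is nothing to prove, so assume $x \neq x^*$, which in particular gives $\|x^* - x\|_1 > 0$. I then apply Lemma~\ref{lem:updateratioweighted} to $x$ and $x^*$, obtaining an $x' \in J \cap B$ with ${\rm dist}_B(x, x') \le 2$ satisfying
\[
f(x') - f(x) \;\ge\; \frac{2}{\|x^* - x\|_1}\bigl(f(x^*) - f(x)\bigr).
\]
Local optimality forces $f(x') - f(x) \le 0$, so the right-hand side must also be $\le 0$, i.e., $f(x^*) \le f(x)$. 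Combined with the global optimality of $x^*$ this yields $f(x) = f(x^*)$, as required.

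There is no real obstacle here: the quantitative improvement guarantee in Lemma~\ref{lem:updateratioweighted} (which itself depends on the key Lemma~\ref{prop:localimprove}) already encodes ``local optimum implies global optimum'' in quantitative form, and the corollary is just the qualitative contrapositive-style consequence. The only small point to watch is handling the edge case $x = x^*$ separately so that the division by $\|x^* - x\|_1$ in the lemma is legitimate.
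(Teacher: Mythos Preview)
Your proposal is correct and matches the paper's approach exactly: the paper does not even write out a proof, presenting the corollary as an immediate consequence of Lemma~\ref{lem:updateratioweighted} (just as Corollary~\ref{cor:optimality} is stated without proof after Lemma~\ref{lem:updateratio}). Your argument is precisely the intended one-line deduction, with the edge case $x = x^*$ handled appropriately.
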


We are now ready to prove Theorem~\ref{thm:mainweighted}. 

\begin{proof}[Proof of Theorem~\ref{thm:mainweighted}]
By the same argument as Theorem~\ref{thm:main}, in which we use Lemma~\ref{lem:updateratioweighted} instead of Lemma~\ref{lem:updateratio}, 
we see that Algorithm~\ref{alg:02} terminates by executing  $O(B_{\rm size} \log ({\rm OPT} - f(x_0)) )$ iterations. 
Thus, the running time is polynomial in $\sum_{v \in V} \sum_{\alpha \in B(v)} \log ( | \alpha | + 1) + \max_{x \in J} \log ( |f(x)| + 1)$. 
The optimality of the obtained solution is guaranteed by Corollary~\ref{cor:weightlocalopt}. 
\end{proof}

\begin{remark}
\label{rem:mjump}
Functions with (SBO-M-JUMP) form a subclass of  
M-concave functions on constant parity jump systems studied in the context of discrete convex analysis~\cite{KMTJumpM,MinamikawaShioura,murotaDCA,MurotaJumpM}. 
For $J \subseteq \mathbb{Z}^V$, a function $f \colon J \to \mathbb{Z}$ is called 
an {\em M-concave function on a constant parity jump system}~\cite{MurotaJumpM} if it satisfies the following exchange axiom. 

\begin{description}
\item[(M-JUMP)]
For any $x, y \in J$ and for any $(x, y)$-step $s$, 
there exists an $(x+s, y)$-step $t$ such that 
$f(x+s+t) + f(y-s-t) \ge f(x) + f(y)$. 
\end{description}

We can see that  (SBO-M-JUMP) implies (M-JUMP) as follows. 
For $x, y \in J$, suppose that there exist a $2$-step decomposition  $\{p_1, \dots , p_\ell \}$ of $y-x$
and $g_i \in \mathbb{R}$ for $i \in [\ell]$ satisfying the conditions in (SBO-M-JUMP). 
For any $(x, y)$-step $s$, there exists an $(x+s, y)$-step $t$ such that $s+t= p_i$ for some $i \in [\ell]$. 
Such $t$ satisfies the conditions in (M-JUMP), because 
\begin{align*}
f(x+s+t) + f(y-s-t) &= f(x+p_i) + f \Big( x+ \sum_{j \in [\ell] \setminus \{i\}} p_j \Big) \\
&\ge (f(x) + g_i) +  \Big( f(x) + \sum_{j \in [\ell] \setminus \{i\}} g_j  \Big)  = f(x) + f(y). 
\end{align*}
\end{remark}

\section{Weighted General Factor Problem}
\label{sec:weightedGF}

It was shown by Dudycz and Paluch~\cite{DudyczP17} that 
the edge-weighted variant of the optimal general factor problem can also be solved in polynomial time if each $B(v)$ has no gap of length more than one. 
Formally, in the {\em weighted optimal general factor problem}, 
given a graph $G=(V, E)$, an edge weight $w(e) \in \mathbb{Z}$ for $e \in E$,  and a set $B(v) \subseteq \mathbb Z$ of integers for each $v \in V$, 
we seek for a $B$-factor $F \subseteq E$ that maximizes its total weight $\sum_{e \in F} w(e)$, where we denote $w(F) := \sum_{e \in F} w(e)$. 
Their algorithm consists of local improvement steps used in  Algorithms~\ref{alg:01} and~\ref{alg:02} and a scaling technique. 

In what follows in this section, 
we show that the polynomial solvability of the weighted optimal general factor problem is derived from Theorem~\ref{thm:mainweighted}.

\begin{theorem}[Dudycz and Paluch~\cite{DudyczP17}]
The weighted optimal general factor problem can be solved in polynomial time if each $B(v)$ has no gap of length more than one. 
\end{theorem}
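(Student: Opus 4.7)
The plan is to derive the weighted optimal general factor problem as a special case of \textsc{Valuated Jump System Intersection} and invoke Theorem~\ref{thm:mainweighted}. Concretely, for a given instance $(G=(V,E), w, (B(v))_{v\in V})$, I would set $J := \{ d_F \mid F \subseteq E\} \subseteq \mathbb{Z}^V$ and define $f \colon J \to \mathbb{Z}$ by $f(x) := \max\{ w(F) \mid F \subseteq E,\ d_F = x\}$. With $B \subseteq \mathbb Z^V$ the direct product of the $B(v)$'s, the weighted optimal general factor problem is exactly the problem of maximizing $f$ over $J \cap B$.

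I would then verify the three conditions of Theorem~\ref{thm:mainweighted}. For (C1'), Cornu\'ejols' algorithm~\cite{CORNUEJOLS1988185} produces a feasible $B$-factor in polynomial time (or reports infeasibility). For (C3'), when $B'$ is a direct product of parity intervals, a vector $x \in J \cap B'$ maximizing $f(x)$ corresponds to a maximum-weight $(a,b)$-parity factor in $G$, which is polynomial-time solvable by a reduction to weighted matching; see~\cite[Section 35]{lexbook}.

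The core step is verifying (C2'): that $f$ satisfies (SBO-M-JUMP). Given $x, y \in J$, I would pick optimal realizations $M, N \subseteq E$ with $d_M = x$, $d_N = y$, $w(M) = f(x)$, $w(N) = f(y)$, and decompose the symmetric difference $M \triangle N$ into edge-disjoint $(M,N)$-alternating paths $P_1, \dots, P_\ell$ and alternating cycles $C_1, \dots, C_m$ (as in Example~\ref{ex:02}). Setting $p_i := d_{N \cap P_i} - d_{M \cap P_i}$ gives a $2$-step decomposition of $y-x$ since the cycles contribute zero to the degree difference. For the weights, I would set $g_i := w(N \cap P_i) - w(M \cap P_i)$. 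For any $I \subseteq [\ell]$, the edge set $F_I := M \triangle \big(\bigcup_{i \in I} P_i\big)$ realizes $x + \sum_{i \in I} p_i$ with weight $w(M) + \sum_{i \in I} g_i$, so $f(x + \sum_{i \in I} p_i) \ge f(x) + \sum_{i \in I} g_i$.

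The subtle point --- which I expect to be the main obstacle --- is showing that $f(y) = f(x) + \sum_{i \in [\ell]} g_i$, i.e.\ that the alternating cycles do not contribute any net weight. This follows from the optimality of $M$ and $N$: for each alternating cycle $C_j$, the edge set $M \triangle C_j$ is also a realization of $d_M = x$, so $w(M \triangle C_j) \le w(M)$, giving $w(N \cap C_j) \le w(M \cap C_j)$; symmetrically, $N \triangle C_j$ realizes $y$, yielding $w(M \cap C_j) \le w(N \cap C_j)$. Hence $w(M \cap C_j) = w(N \cap C_j)$ for every $j$, which together with the path contributions gives $w(N) - w(M) = \sum_{i \in [\ell]} g_i$, as required. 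Once (SBO-M-JUMP) is established, Theorem~\ref{thm:mainweighted} is directly applicable, and since the bit-lengths of $\max_x |f(x)| \le \sum_{e \in E} |w(e)|$ and of the $B(v)$'s are polynomial in the input size, the overall running time is polynomial.
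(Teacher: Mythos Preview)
Your proposal is correct and follows essentially the same route as the paper: define $J=\{d_F\mid F\subseteq E\}$ and $f(x)=\max\{w(F)\mid d_F=x\}$, verify (C1')--(C3'), and apply Theorem~\ref{thm:mainweighted}. Your treatment of (C2') is in fact more careful than the paper's: you explicitly justify the equality $f(y)=f(x)+\sum_{i\in[\ell]} g_i$ by arguing, via the optimality of $M$ and $N$, that every alternating cycle $C_j$ satisfies $w(M\cap C_j)=w(N\cap C_j)$, a point the paper leaves implicit. The only small item you omit is the final step of recovering an optimal edge set $F^*$ from the optimal degree sequence $x^*$, which the paper handles with a weighted $b$-factor computation.
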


\begin{proof}
Let $G=(V, E)$, $w$, and  $B$ be an instance of the weighted optimal general factor problem
such that each $B(v)$ has no gap of length more than one. 
Let $J := \{d_F \mid F \subseteq E \}$, and define 
$f \colon J \to \mathbb{Z}$ by $f(x) := \max \{ w(F) \mid d_F=x,\, F \subseteq E  \} $ for $x \in J$. 

We now show (C1'), (C2'), and (C3') in Theorem~\ref{thm:mainweighted}. 
Since an edge set $F_0 \subseteq E$ with $d_{F_0} \in B$ can be found  in polynomial time by the algorithm of Cornu\'ejols~\cite{CORNUEJOLS1988185} (if it exists), 
we obtain $x_0 := d_{F_0}$ satisfying the condition in (C1'). 
The subproblem in (C3') is to find an $(a, b)$-factor with parity constraints that maximizes the total edge weight, 
which can be solved in polynomial time; see~\cite[Section 35]{lexbook}. 
To see (C2'), 
for $x, y \in J$, let $M, N \subseteq E$ be edge sets such that $d_M = x$, $d_N = y$, $w(M) = f(x)$, and $w(N) = f(y)$.
As in Example~\ref{ex:02}, 
the symmetric difference of $M$ and $N$ can be decomposed into alternating paths $P_1, \dots , P_\ell$ and alternating cycles 
such that $\{d_{N\cap P_i} - d_{M \cap P_i} \mid i \in [\ell] \}$ is a $2$-step decomposition of $y-x$.
For  $i \in [\ell]$, let 
$p_i := d_{N\cap P_i} - d_{M \cap P_i}$ and $g_i := w(N\cap P_i) - w(M\cap P_i)$. 
For $I \subseteq [\ell]$,  let $F_I  \subseteq E$ be the symmetric difference of $M$ and $\bigcup_{i \in [I]} P_i$. 
Then, since $d_{F_I} = x + \sum_{i \in I} p_i$ and $w(F_I) = f(x) + \sum_{i \in I} g_i$, we obtain $f(x + \sum_{i \in I} p_i) \ge f(x) + \sum_{i \in I} g_i$. 
This shows (C2'). 

By Theorem~\ref{thm:mainweighted}, 
we can find $x^* \in J \cap B$ maximizing $f(x^*)$ in polynomial time. 
Furthermore, an edge set $F^* \subseteq E$ satisfying 
$w(F^*) = f(x^*)$ and $d_{F^*}=x^*$ can also be found  in polynomial time
by a weighted $b$-factor algorithm. 
By definition, such $F^*$ is an optimal solution of the weighted optimal general factor problem. 
\end{proof}

\section{Concluding Remarks}
\label{sec:conclusion}

In this paper, we have revealed that (SBO-JUMP) is a key property to obtain a polynomial time-algorithm for {\sc Jump System Intersection}, 
which is an abstract form of the optimal general factor problem. 
By using this abstraction, we have obtained a simpler correctness proof for the polynomial solvability of the optimal general factor problem.  
We have also extended the results to the valuated case.  

There are some possible directions for future research. 
First,  it is nice if we obtain more examples of jump systems satisfying  (SBO-JUMP) other than Examples~\ref{ex:01}--\ref{ex:02}, 
because they support the importance of this class of jump systems. 
Second, it is a natural open problem whether we can obtain a strongly polynomial-time algorithm for the weighted general factor problem. 
Finally, it is interesting to find a new property of $J$ other than (SBO-JUMP) that enables us to design a different polynomial-time algorithm.

\section*{Acknowledgements}
The author thanks Kenjiro Takazawa for his helpful comments. 
This work is supported by JSPS KAKENHI grant numbers 20K11692 and 20H05795, Japan.

\end{document}